\newcommand{\blind}{1}
\tikzset{%
  -Latex,semithick,
  >={Latex[width=1.5mm,length=2mm]},
  obs/.style n args = {2}{name = #1, circle, draw, inner sep = 7pt, label = center:$#2$},
  latent/.style n args = {2}{name = #1, circle, draw=gray, dashed, inner sep = 7pt, label = center:$#2$},
  greyDashed/.style={draw=gray, dashed}
}
\newtheorem{theorem}{Theorem}
\newtheorem{prop}[theorem]{Proposition}
\newtheorem{definition}[theorem]{Definition}
\newtheorem{coro}[theorem]{Corollary}
\begin{document}

\if1\blind
{
  \title{\bf Feedback-augmented Non-homogeneous Hidden Markov Models for Longitudinal Causal Inference}
  \author{Jouni Helske\thanks{
   This work was supported by the INVEST Research Flagship Centre and the Research Council of Finland (decision numbers: 331817, 355153, 345546).}\hspace{.2cm}\\
    INVEST Research Flagship Centre, University of Turku}
  \maketitle
} \fi

\if0\blind
{
  \bigskip
  \bigskip
  \bigskip
  \begin{center}
    {\LARGE\bf Feedback-augmented Non-homogeneous Hidden Markov Models for Longitudinal Causal Inference}
\end{center}
  \medskip
} \fi

\bigskip
\begin{abstract}
Hidden Markov models are widely used for modeling sequential data but typically have limited applicability in observational causal inference due to their strong conditional independence assumptions. I introduce feedback-augmented non-homogeneous hidden Markov model (FAN-HMM), which incorporate time-varying covariates and feedback mechanisms from past observations to latent states and future responses. Integrating these models with the structural causal model framework allows flexible causal inference in longitudinal data with time-varying unobserved heterogeneity and multiple causal pathways. I show how, in a common case of categorical response variables, long-term causal effects can be estimated efficiently without the need for simulating counterfactual trajectories. Using simulation experiments, I study the performance of FAN-HMM under the common misspecification of the number of latent states, and finally apply the proposed approach to estimate the effect of the 2013 parental leave reform on fathers’ paternal leave uptake in Finnish workplaces.
\end{abstract}

\noindent%
{\it Keywords:} causality, categorical data analysis, directed acyclic graph, latent Markov model, social sequence data
\vfill

\newpage

\section{Introduction}

Discrete-state hidden Markov models (HMMs) and their variations are widely used to analyze temporal data in various settings, such as speech recognition \citep{Rabiner1989}, ecology \citep{Pohle2017, Glennie2023}, learning analytics \citep{Saqr2023, Helske2024_tutorial}, and life course research \citep{helske2018, Liao2022}. Standard homogeneous HMMs have been extended to cases where some or all model components (initial, transition, and emission distributions) depend on individual-specific, potentially time-varying covariates \citep{Hughes1994, MacDonald1997, Bartolucci2013}. \citet{Berchtold1999} introduced a double chain Markov model which extends the standard HMM by incorporating autoregressive dependencies at the observation level to model non-homogeneous time series. Similarly, \citet{Altman2007} proposed mixed-HMMs where both emission and transition distributions depend on individual-specific random effects and covariates, with observations belonging to the exponential family. In general, these and many other extensions can be classified as non-homogeneous HMM (NHMM), where some of model components vary over time and/or between individuals.

HMMs are commonly used in prediction tasks or as a method to extract and compress information of complex temporal data using relatively small number of interpretable latent states. Less work has been done on applying HMMs in the context of causal inference. \citet{Bartolucci2016} applied the potential outcomes (PO) framework and propensity scores \citep{Rosenbaum1983} to HMM setting where emission probabilities depend on latent states, which, in turn, depend on a discrete treatment variable. A similar approach, utilizing the concept of potential versions of discrete latent variables, was used by \citet{Bartolucci2023}, who considered the effect of a binary treatment variable on the transition probabilities of an HMM.

Recently, \citet{Helske2024_dmpm} discussed how standard HMMs are unsuitable for estimating certain type of long-term causal effects due to the conditional independence assumption of responses given the latent states and covariates. In this paper, I present the feedback-augmented NHMM (FAN-HMM), which extends standard homogeneous HMMs with time-varying covariates and allowing responses and hidden states to depend on previous responses. I show that estimating the causal effects of observables on responses using FAN-HMM can be done efficiently in case of categorical response variables without the need for simulating counterfactual trajectories as required in the Markovian models considered by \citet{Helske2024_dmpm}. Through simulation experiments, I examine how the common issue of misspecifying the number of hidden states affects both model estimation and the causal effect estimates. I then use FAN-HMM to estimate the effect of the 2013 parental leave reform in Finland on fathers' parental leave uptake within workplaces, where the hidden states capture the latent workplace culture influencing leave-taking probabilities.

The replication code for the paper is available on GitHub at \url{https://anonymous.4open.science/r/fanhmm/}.

\section{Models}

Let $y_{1:T} = y_1,\ldots,y_T$ be a sequence of observed symbols from a categorical distribution with $M$ different categories, and let $z_{1:T} = z_1,\ldots,z_T$ be a discrete-state first-order Markov chain with $S$ states. Instead of directly observing $z_{1:T}$, it is assumed that the conditional distribution of each observation $y_t$ depends on the latent state $z_t$, and given the latent states, the observations are conditionally independent, i.e., $P(y_t \mid y_{1:T}, z_{1:T}) = P(y_t \mid z_t)$. The time-homogeneous HMM is then defined by a set of parameters $\boldsymbol{\theta} = \{\boldsymbol{\pi}, \mathbf{A}, \mathbf{B}\}$, with initial state probabilities $\pi_s = P(z_1 = s)$, transition probabilities $a_{sr}= P(z_{t+1} = r \mid z_t = s)$, and emission probabilities $b_s(m) = P(y_t = m \mid z_t = s)$, $s,r=1,\ldots, S$, $m = 1,\ldots, M$. 

In general, the observations $y_t$ do not need to be categorical scalars; they can also be continuous or multivariate, potentially consisting of mixed distributions, with suitable modifications on the parameterization of the emission distributions. However, in this work, I focus on the important special case of categorical observations, which are commonly encountered for example in life course research where observations correspond variables such as employment or partnership status. 

When data contain multiple observation sequences $y_{t,i}$, $i=1,\ldots, N$, $t = 1, \ldots, T_i$, we can assume that each individual $i$ has their own latent state trajectory $z_{1:{T_i}}$, with both observations and latent states generated by the same $\boldsymbol{\theta}$. Therefore, this HMM is also homogeneous with respect to individuals. 

While these basic HMMs and their mixtures \citep{seqHMM} have been successfully applied to various settings, additional measurements of other, possibly time-varying, variables may influence the dynamics of observations and the underlying latent states. In such cases, we can generalize the HMM to a non-homogeneous HMM (NHMM), defined by ${\boldsymbol{\pi}_i, \mathbf{A}_{t,i}, \mathbf{B}_{t,i}}$. Here $\pi_i$ refers to initial hidden state probability vector for individual $i$, and $\mathbf{A}_{t,i}$ and $\mathbf{B}_{t,i}$ denote the transition and emission probability matrices at time $t$ for individual $i$, respectively (where $\mathbf{A}_{t,i}$ defines the transition from time $t-1$ to time $t$). These probabilities can depend on potentially different sets of covariate matrices $\mathbf{X}^\pi$, $\mathbf{X}^A$, and $\mathbf{X}^B$, which I assume contain no missing observations. For notational simplicity, I will occasionally omit the sequence index $i$. 

Let $\mathbf{x}^{\pi}_i$,$\mathbf{x}^A_{t,i}$, and $\mathbf{x}^B_{t,i}$ be vectors of lengths $K^\pi$, $K^A$, and $K^B$, corresponding to the covariates of sequence $i$ (at time $t$). Then, we can define $\boldsymbol{\pi}_i$ and the rows of $\mathbf{A}_{t,i}$ and $\mathbf{B}_{t,i}$ as
\begin{equation}\label{eq:softmax}
\boldsymbol{\pi}_i = \textrm{softmax}(\boldsymbol{\gamma}^\pi \mathbf{x}^{\pi}_i),\quad
\mathbf{A}_{t,i}(s,:) = \textrm{softmax}(\boldsymbol{\gamma}^A_s \mathbf{x}^A_{t,i}),\quad
\mathbf{B}_{t,i}(s,:) = \textrm{softmax}(\boldsymbol{\gamma}^B_s \mathbf{x}^B_{t,i}),
\end{equation}
for each sequence $i$, time point $t$, and current state $s = 1,\ldots, S$ at time $t$. Here $\boldsymbol{\gamma}^\pi$, $\boldsymbol{\gamma}^A_s$, and $\boldsymbol{\gamma}^B_s$ are $S \times K^{\pi}$, $S \times K^A$, and $M \times K^B$ matrices of regression coefficients. See Section A of the supplementary material for details on the parameterization of the coefficients ensuring the identifiability of the softmax function. 

For clarity of exposition, and without loss of generality, I now focus on a case where $x_t$ represents variable affecting both the observation $y_t$ and latent state $z_t$ (via emission and transition probabilities $\mathbf{B}_t$ and $\mathbf{A}_t$, respectively); variable $v_t$ affects only emission probabilities $\mathbf{B}_t$; and $w_t$ affect only transition probabilities $\mathbf{A}_t$. While initial state could depend on separate covariates, I assume for simplicity that $\boldsymbol{\pi}$ depends on $x_1$ and $w_1$. The directed acyclic graph (DAG) representing this model for three time points for a single sequence is shown in \autoref{fig:dagA}.

\begin{figure}[!ht]
\centering
\begin{subfigure}{0.45\textwidth}
\begin{tikzpicture}
    \node [obs = {w_1}{w_1}] at (0, 1.5) {$ \vphantom{0} $};
    \node [obs = {w_2}{w_2}] at (2, 1.5) {$ \vphantom{0} $};
    \node [obs = {w_3}{w_3}] at (4, 1.5) {$ \vphantom{0} $};
    \node [obs = {z_1}{z_1}] at (0, 3) {$ \vphantom{0} $};
    \node [obs = {z_2}{z_2}] at (2, 3) {$ \vphantom{0} $};
    \node [obs = {z_3}{z_3}] at (4, 3) {$ \vphantom{0} $};
    \node [obs = {y_1}{y_1}] at (0, 4.5) {$ \vphantom{0} $};
    \node [obs = {y_2}{y_2}] at (2, 4.5) {$ \vphantom{0} $};
    \node [obs = {y_3}{y_3}] at (4, 4.5) {$ \vphantom{0} $};
    \node [obs = {v_1}{v_1}] at (0, 6) {$ \vphantom{0} $};
    \node [obs = {v_2}{v_2}] at (2, 6) {$ \vphantom{0} $};
    \node [obs = {v_3}{v_3}] at (4, 6) {$ \vphantom{0} $};
    \node [obs = {x_1}{x_1}] at (0, 7.5) {$ \vphantom{0} $};
    \node [obs = {x_2}{x_2}] at (2, 7.5) {$ \vphantom{0} $};
    \node [obs = {x_3}{x_3}] at (4, 7.5) {$ \vphantom{0} $};
    \path [->] (x_3) edge [bend right = 45] (z_3);
  \path [->] (x_2) edge [bend right = 45] (z_2);
  \path [->] (w_3) edge (z_3);
  \path [->] (w_2) edge (z_2);
  \path [->] (w_1) edge (z_1);
  \path [->] (x_1) edge [bend right = 45] (z_1);
  \path [->] (v_3) edge (y_3);
  \path [->] (v_2) edge (y_2);
  \path [->] (v_1) edge (y_1);
  \path [->] (x_3) edge [bend right = 45] (y_3);
  \path [->] (x_2) edge [bend right = 45] (y_2);
  \path [->] (x_1) edge [bend right = 45] (y_1);
  \path [->] (z_1) edge (y_1);
  \path [->] (z_2) edge (y_2);
  \path [->] (z_3) edge (y_3);
  \path [->] (z_2) edge (z_3);
  \path [->] (z_1) edge (z_2);
\end{tikzpicture}
\caption{}
\label{fig:dagA}
\end{subfigure}
\hfill
\begin{subfigure}{0.45\textwidth}
\begin{tikzpicture}
    \node [obs = {w_1}{w_1}] at (0, 1.5) {$ \vphantom{0} $};
    \node [obs = {w_2}{w_2}] at (2, 1.5) {$ \vphantom{0} $};
    \node [obs = {w_3}{w_3}] at (4, 1.5) {$ \vphantom{0} $};
    \node [obs = {z_1}{z_1}] at (0, 3) {$ \vphantom{0} $};
    \node [obs = {z_2}{z_2}] at (2, 3) {$ \vphantom{0} $};
    \node [obs = {z_3}{z_3}] at (4, 3) {$ \vphantom{0} $};
    \node [obs = {y_1}{y_1}] at (0, 4.5) {$ \vphantom{0} $};
    \node [obs = {y_2}{y_2}] at (2, 4.5) {$ \vphantom{0} $};
    \node [obs = {y_3}{y_3}] at (4, 4.5) {$ \vphantom{0} $};
    \node [obs = {v_1}{v_1}] at (0, 6) {$ \vphantom{0} $};
    \node [obs = {v_2}{v_2}] at (2, 6) {$ \vphantom{0} $};
    \node [obs = {v_3}{v_3}] at (4, 6) {$ \vphantom{0} $};
    \node [obs = {x_1}{x_1}] at (0, 7.5) {$ \vphantom{0} $};
    \node [obs = {x_2}{x_2}] at (2, 7.5) {$ \vphantom{0} $};
    \node [obs = {x_3}{x_3}] at (4, 7.5) {$ \vphantom{0} $};
    \path [->] (x_3) edge [bend right = 45] (z_3);
  \path [->] (x_2) edge [bend right = 45] (z_2);
  \path [->] (w_3) edge (z_3);
  \path [->] (w_2) edge (z_2);
  \path [->] (w_1) edge (z_1);
  \path [->] (x_1) edge [bend right = 45] (z_1);
  \path [->] (y_2) edge (z_3);
  \path [->] (y_1) edge (z_2);
  \path [->] (y_2) edge (y_3);
  \path [->] (y_1) edge (y_2);
  \path [->] (v_3) edge (y_3);
  \path [->] (v_2) edge (y_2);
  \path [->] (v_1) edge (y_1);
  \path [->] (x_3) edge [bend right = 45] (y_3);
  \path [->] (x_2) edge [bend right = 45] (y_2);
  \path [->] (x_1) edge [bend right = 45] (y_1);
  \path [->] (z_1) edge (y_1);
  \path [->] (z_2) edge (y_2);
  \path [->] (z_3) edge (y_3);
  \path [->] (z_2) edge (z_3);
  \path [->] (z_1) edge (z_2);
\end{tikzpicture}
\caption{}
\label{fig:dagB}
\end{subfigure}
\caption{Directed acyclic graphs of NHMM (a) and FAN-HMM (b) for time points $t=1,2,3$. In (a), state $z$ depends on past state and covariates $x$ and $w$, whereas observation $y$ depends on current state and covariates $x$ and $v$. In (b), the transition and emission probabilities depend also on past observations.}
    \label{fig:nhmms}
\end{figure}
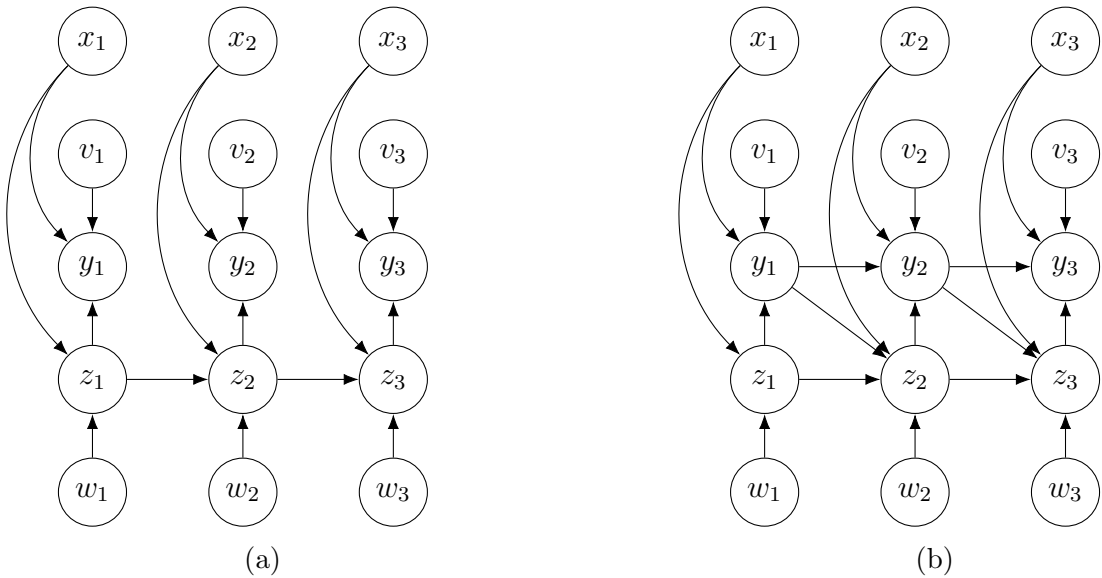

On \autoref{fig:dagB}, NHMM is extended to FAN-NHMM, where an additional edges $y_{t-1}\to y_t$ induce direct observational-level autocorrelation, and edges $y_{t-1} \to z_t$ enablefeedback from the observational level back to the  latent level.

As an example of the direct effect of $y_{t-1} \to y_t$ and feedback $y_{t-1} \to z_t$, consider a study on siblings' enrollment to higher education in a developing country with large family sizes. Here, the time variable represents the child's rank in birth order. A child's educational choices can have a direct effect on their younger sibling ($y_t \to y_{t+1})$, but these choices can also influence the latent state that represents the general expectations and capabilities of the family regarding education ($y_t \to z_{t+1}$). This, in turn, affects the choices of the younger siblings ($z_{t+1} \to y_{t+1}$).
 
While missing response variables are straightforward to handle in HMM and NHMM, they need to be explicitly marginalized out in the forward algorithm \citep{Rabiner1989} of FAN-HMM due to the dependency of $\mathbf{A}$ and $\mathbf{B}$ on responses. In complete data case, FAN-HMM can be formulated as an NHMM using Equations \ref{eq:softmax} by including relevant (lagged) response variables to the covariate matrices $\mathbf{X}^A$ and $\mathbf{X}^B$, and treating the first observation $y_1$ as fixed (or by defining $\mathbf{B}_1$ separately without dependency on $y_0$), so parameter estimation of FAN-HMMs pose no additional challenges compared to NHMMs. However, the dependency of states and observations on past observations must still be accounted for in forward prediction of counterfactual trajectories for causal inference.

For parameter estimation, I follow the maximum likelihood approach. As with other mixture models, the likelihood of HMMs is known to be highly multimodal \citep{Rabiner1989, Cappe2005, Zucchini2009, Visser2022}. In maximum likelihood estimation, local optima and flat regions of the likelihood surface can lead numerical optimization algorithms to converge suboptimal solutions. Therefore, the model is typically estimated multiple times using different initial values,  so we can then claim with some certainty that the best solution corresponds to the global optimum. Section B of the supplementary material contains further discussion on the parameter estimation of HMMs, also from a Bayesian perspective.

\section{Causal inference using with FAN-HMMs}\label{sec:causality}

Instead of PO framework used, e.g., by \citet{Bartolucci2016, Bartolucci2023}, I consider the causal inference within the framework of structural causal models (SCMs) \citep{Pearl2009}, which focus on the effects of hypothetical interventions that alter the functional structure of the system under study. Denoting the intervention/treatment variable as $X$ and the outcome variable as $Y$, an intervention on $X$, denoted by $\operatorname{do}(X = x)$, leads to an interventional distribution $p(Y = y| \operatorname{do}(X = x))$, abbreviated as $p(y \mid \operatorname{do}(x))$. If $p(y \mid \operatorname{do}(x))$ can be derived from the observed probabilities of the SCM, the causal effect of $X$ on $Y$ is said to be \emph{identifiable}. 

Under the SCM approach, the causal assumptions of the system can be encoded as a DAG, which aligns well with the HMMs which are often represented using DAGs even in non-causal settings. Because the sufficient and necessary conditions for identifiability are encoded in this DAG, we can use graph-based methods such as backdoor criterion and general do-calculus \citep{Pearl2009} to assess the identifiability of the causal effects.

\autoref{fig:causaldag} presents two causal graphs that generalize the graph in \autoref{fig:dagB}. In both graphs, latent variables $u$, induce dependency between all three covariates over time, while in in \autoref{fig:fanhmmB} there is an additional direct effect of $x_t \to x_{t+1}$ and similarly for $v$ and $w$ variables. To simplify notation, without loss of generality, I continue to treat the covariate nodes $x_t$, $w_t$, and $v_t$ in \autoref{fig:causaldag} as univariate, although they can generally consist of multiple variables \citep{Tikka2023}. The dashed edges and nodes reflect variables and dependencies which are not explicitly modeled by FAN-HMM, yet they must be accounted for when deriving the identifying functionals of causal effects.

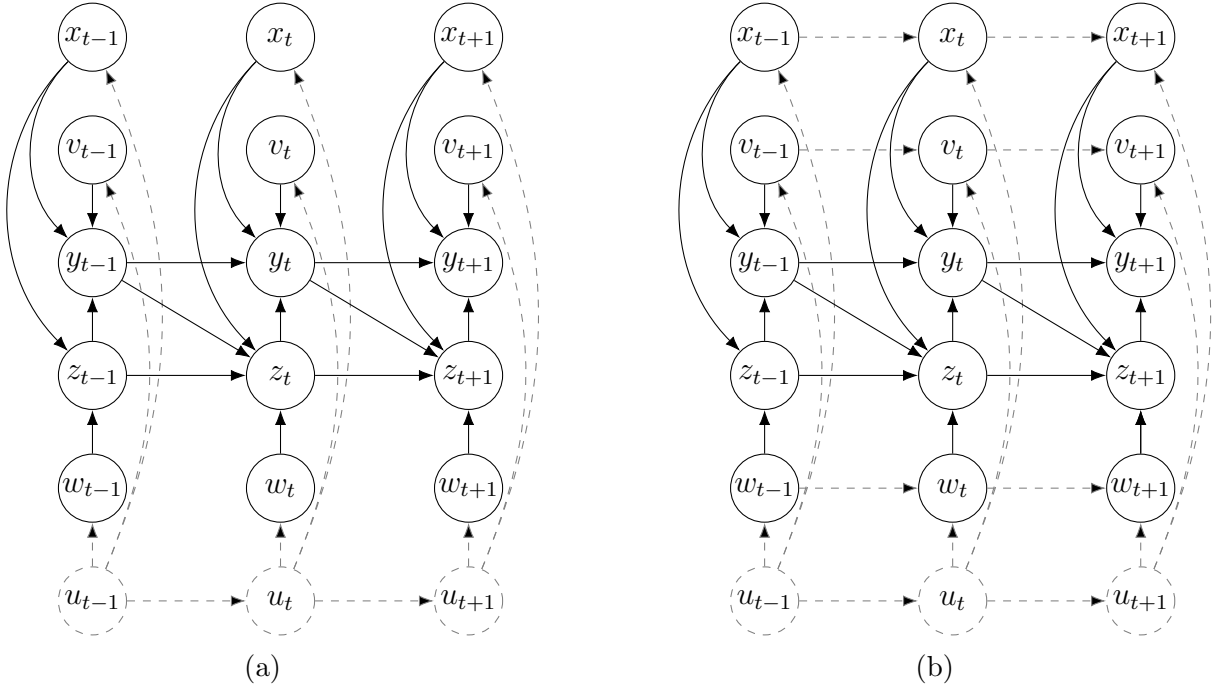
\begin{figure}[!ht]
\centering
\begin{subfigure}{0.45\textwidth}
\begin{tikzpicture}
    \node [latent = {u_{t-1}}{u_{t-1}}] at (0, 0) {$ \vphantom{0} $};
    \node [latent = {u_t}{u_t}] at (2.5, 0) {$ \vphantom{0} $};
    \node [latent = {u_{t+1}}{u_{t+1}}] at (5, 0) {$ \vphantom{0} $};
    \node [obs = {w_{t-1}}{w_{t-1}}] at (0, 1.5) {$ \vphantom{0} $};
    \node [obs = {w_t}{w_t}] at (2.5, 1.5) {$ \vphantom{0} $};
    \node [obs = {w_{t+1}}{w_{t+1}}] at (5, 1.5) {$ \vphantom{0} $};
    \node [obs = {z_{t-1}}{z_{t-1}}] at (0, 3) {$ \vphantom{0} $};
    \node [obs = {z_t}{z_t}] at (2.5, 3) {$ \vphantom{0} $};
    \node [obs = {z_{t+1}}{z_{t+1}}] at (5, 3) {$ \vphantom{0} $};
    \node [obs = {y_{t-1}}{y_{t-1}}] at (0, 4.5) {$ \vphantom{0} $};
    \node [obs = {y_t}{y_t}] at (2.5, 4.5) {$ \vphantom{0} $};
    \node [obs = {y_{t+1}}{y_{t+1}}] at (5, 4.5) {$ \vphantom{0} $};
    \node [obs = {v_{t-1}}{v_{t-1}}] at (0, 6) {$ \vphantom{0} $};
    \node [obs = {v_t}{v_t}] at (2.5, 6) {$ \vphantom{0} $};
    \node [obs = {v_{t+1}}{v_{t+1}}] at (5, 6) {$ \vphantom{0} $};
    \node [obs = {x_{t-1}}{x_{t-1}}] at (0, 7.5) {$ \vphantom{0} $};
    \node [obs = {x_t}{x_t}] at (2.5, 7.5) {$ \vphantom{0} $};
    \node [obs = {x_{t+1}}{x_{t+1}}] at (5, 7.5) {$ \vphantom{0} $};
    \path [->, greyDashed] (u_t) edge (u_{t+1});
    \path [->, greyDashed] (u_{t-1}) edge (u_t);
    \path [->, greyDashed] (u_{t+1}) edge [bend right = 22.5] (x_{t+1});
    \path [->, greyDashed] (u_{t+1}) edge [bend right = 22.5] (v_{t+1});
    \path [->, greyDashed] (u_t) edge [bend right = 22.5] (x_t);
    \path [->, greyDashed] (u_t) edge [bend right = 22.5] (v_t);
    \path [->, greyDashed] (u_{t+1}) edge (w_{t+1});
    \path [->, greyDashed] (u_t) edge (w_t);
    \path [->, greyDashed] (u_{t-1}) edge [bend right = 22.5] (x_{t-1});
    \path [->, greyDashed] (u_{t-1}) edge [bend right = 22.5] (v_{t-1});
    \path [->, greyDashed] (u_{t-1}) edge (w_{t-1});
    \path [->] (v_{t+1}) edge (y_{t+1});
    \path [->] (v_t) edge (y_t);
    \path [->] (v_{t-1}) edge (y_{t-1});
    \path [->] (w_{t-1}) edge (z_{t-1});
    \path [->] (w_t) edge (z_t);
    \path [->] (w_{t+1}) edge (z_{t+1});
     \path [->] (x_{t-1}) edge [bend right = 45] (z_{t-1});
     \path [->] (x_t) edge [bend right = 45] (z_t);
     \path [->] (x_{t+1}) edge [bend right = 45] (z_{t+1});
    \path [->] (x_{t+1}) edge [bend right = 45] (y_{t+1});
    \path [->] (x_t) edge [bend right = 45] (y_t);
    \path [->] (x_{t-1}) edge [bend right = 45] (y_{t-1});
    \path [->] (z_{t-1}) edge (y_{t-1});
    \path [->] (z_t) edge (y_t);
    \path [->] (z_{t+1}) edge (y_{t+1});
    \path [->] (z_t) edge (z_{t+1});
    \path [->] (z_{t-1}) edge (z_t);
    \path [->] (y_{t-1}) edge (y_t);
    \path [->] (y_t) edge (y_{t+1});
    \path [->] (y_{t-1}) edge (z_t);
    \path [->] (y_t) edge (z_{t+1});
\end{tikzpicture}
\caption{}
\label{fig:fanhmmA}
\end{subfigure}
\hfill
\begin{subfigure}{0.45\textwidth}
\begin{tikzpicture}
       \node [latent = {u_{t-1}}{u_{t-1}}] at (0, 0) {$ \vphantom{0} $};
    \node [latent = {u_t}{u_t}] at (2.5, 0) {$ \vphantom{0} $};
    \node [latent = {u_{t+1}}{u_{t+1}}] at (5, 0) {$ \vphantom{0} $};
    \node [obs = {w_{t-1}}{w_{t-1}}] at (0, 1.5) {$ \vphantom{0} $};
    \node [obs = {w_t}{w_t}] at (2.5, 1.5) {$ \vphantom{0} $};
    \node [obs = {w_{t+1}}{w_{t+1}}] at (5, 1.5) {$ \vphantom{0} $};
    \node [obs = {z_{t-1}}{z_{t-1}}] at (0, 3) {$ \vphantom{0} $};
    \node [obs = {z_t}{z_t}] at (2.5, 3) {$ \vphantom{0} $};
    \node [obs = {z_{t+1}}{z_{t+1}}] at (5, 3) {$ \vphantom{0} $};
    \node [obs = {y_{t-1}}{y_{t-1}}] at (0, 4.5) {$ \vphantom{0} $};
    \node [obs = {y_t}{y_t}] at (2.5, 4.5) {$ \vphantom{0} $};
    \node [obs = {y_{t+1}}{y_{t+1}}] at (5, 4.5) {$ \vphantom{0} $};
    \node [obs = {v_{t-1}}{v_{t-1}}] at (0, 6) {$ \vphantom{0} $};
    \node [obs = {v_t}{v_t}] at (2.5, 6) {$ \vphantom{0} $};
    \node [obs = {v_{t+1}}{v_{t+1}}] at (5, 6) {$ \vphantom{0} $};
    \node [obs = {x_{t-1}}{x_{t-1}}] at (0, 7.5) {$ \vphantom{0} $};
    \node [obs = {x_t}{x_t}] at (2.5, 7.5) {$ \vphantom{0} $};
    \node [obs = {x_{t+1}}{x_{t+1}}] at (5, 7.5) {$ \vphantom{0} $};
    \path [->, greyDashed] (u_t) edge (u_{t+1});
    \path [->, greyDashed] (u_{t-1}) edge (u_t);
    \path [->, greyDashed] (u_{t+1}) edge [bend right = 22.5] (x_{t+1});
    \path [->, greyDashed] (u_{t+1}) edge [bend right = 22.5] (v_{t+1});
    \path [->, greyDashed] (u_t) edge [bend right = 22.5] (x_t);
    \path [->, greyDashed] (u_t) edge [bend right = 22.5] (v_t);
    \path [->, greyDashed] (u_{t+1}) edge (w_{t+1});
    \path [->, greyDashed] (u_t) edge (w_t);
    \path [->, greyDashed] (u_{t-1}) edge [bend right = 22.5] (x_{t-1});
    \path [->, greyDashed] (u_{t-1}) edge [bend right = 22.5] (v_{t-1});
    \path [->, greyDashed] (u_{t-1}) edge (w_{t-1});
    \path [->] (v_{t+1}) edge (y_{t+1});
    \path [->] (v_t) edge (y_t);
    \path [->] (v_{t-1}) edge (y_{t-1});
    \path [->] (w_{t-1}) edge (z_{t-1});
    \path [->] (w_t) edge (z_t);
    \path [->] (w_{t+1}) edge (z_{t+1});
    \path [->] (w_{t+1}) edge (z_{t+1});
     \path [->] (x_{t-1}) edge [bend right = 45] (z_{t-1});
     \path [->] (x_t) edge [bend right = 45] (z_t);
     \path [->] (x_{t+1}) edge [bend right = 45] (z_{t+1});
    \path [->] (x_{t+1}) edge [bend right = 45] (y_{t+1});
    \path [->] (x_t) edge [bend right = 45] (y_t);
    \path [->] (x_{t-1}) edge [bend right = 45] (y_{t-1});
    \path [->] (z_{t-1}) edge (y_{t-1});
    \path [->] (z_t) edge (y_t);
    \path [->] (z_{t+1}) edge (y_{t+1});
    \path [->] (z_t) edge (z_{t+1});
    \path [->] (z_{t-1}) edge (z_t);
    \path [->] (y_{t-1}) edge (y_t);
    \path [->] (y_t) edge (y_{t+1});
    \path [->] (y_{t-1}) edge (z_t);
    \path [->] (y_t) edge (z_{t+1});
    \path [->, greyDashed] (x_{t-1}) edge (x_t);
    \path [->, greyDashed] (x_t) edge (x_{t+1});    
    \path [->, greyDashed] (v_{t-1}) edge (v_t);
    \path [->, greyDashed] (v_t) edge (v_{t+1});
    \path [->, greyDashed] (w_{t-1}) edge (w_t);
    \path [->, greyDashed] (w_t) edge (w_{t+1});
\end{tikzpicture}
\caption{}
\label{fig:fanhmmB}
\end{subfigure}
    \caption{Two causal graphs, with different assumptions about the dependency structure between the covariates (represented by gray dashed arrows and nodes). On the left, the unobserved variable $u_t$ has a direct causal effect on covariates $x_t$, $v_t$, and $w_t$. On the right, each covariate also directly depends on its own past values.}
    \label{fig:causaldag}
\end{figure}

The role of the latent variables $z$ can vary depending on the objectives of the causal inference task. First, $z$ can be considered as an auxiliary variable that captures time-varying individual-level heterogeneity without being of direct interest. In this case, our causal queries are of the form $p(y \mid \operatorname{do}(x))$. Alternatively, we may be interested in conditional causal effects of the form $p(y \mid \operatorname{do}(x), z)$, where the focus is in how the effects of $x$ vary with $z$. In the first case, since the latent states are marginalized out, selecting the correct number of states and interpreting them may be less critical than in the second case, which allows us to examine how potential time-varying effects of interventions depend on the changes in population composition with respect to latent (time-varying) groups characterized by $z$.

To establish the necessary background, Definition \ref{def:basics} lists key concepts related to causal graphs, which will be used in the following results.

\begin{definition}\label{def:basics}
Let $\mathcal{G}$ be a directed acyclic graph. Then,
\begin{enumerate} 
\item A \emph{back-door path} from $x$ to $y$ is any path in $\mathcal{G}$ from $x$ to $y$ that contains an incoming edge to $x$. 
\item A node $z$ is a \emph{collider} in path $P$ if $P$ contains a pattern $a \to z \leftarrow b$.
\item A set of nodes $C$ \emph{blocks} a path $P$ if either $P$ contains a non-collider that is in $C$, or $P$ contains a collider that is not in $C$ and has no descendants in $C$. If $P$ is not blocked, it is said to be \emph{open}.
\item A set $C$ is a \emph{valid adjustment set} for the causal effect $p(y \mid \operatorname{do}(x))$ if (i) no nodes in $C$ are descendants of $x$, and (ii) $C$ blocks every back-door path from $x$ to $y$.  
\end{enumerate}  
\end{definition}  

\begin{theorem}[Back-door criterion \citet{Pearl1995}]\label{th:backdoor}
Given a DAG $\mathcal{G}$, if $C$ is a valid adjustment set relative to intervention $x$ and outcome $y$, then the causal effect of $x$ on $y$ is identifiable and is given by the back-door adjustment formula
\begin{equation}\label{eq:backdoor}
p(y \mid \operatorname{do}(x)) = \sum_{C} p(y \mid x, C)p(C).  
\end{equation}  
\end{theorem}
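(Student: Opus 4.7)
The plan is to derive the back-door adjustment formula by marginalizing the interventional distribution over $C$ and then invoking the two defining properties of a valid adjustment set to convert the resulting expression into observational quantities. Starting from the law of total probability applied to the post-intervention distribution,
\begin{equation*}
p(y \mid \operatorname{do}(x)) = \sum_{C} p(y \mid \operatorname{do}(x), C)\, p(C \mid \operatorname{do}(x)),
\end{equation*}
the proof reduces to showing two equalities: $p(C \mid \operatorname{do}(x)) = p(C)$ and $p(y \mid \operatorname{do}(x), C) = p(y \mid x, C)$.

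For the first equality I would use condition (i) of Definition \ref{def:basics}, namely that no element of $C$ is a descendant of $x$. Consider the mutilated graph $\mathcal{G}_{\overline{x}}$ obtained from $\mathcal{G}$ by deleting every edge pointing into $x$. In $\mathcal{G}_{\overline{x}}$, any directed path from $x$ to a node in $C$ must use an outgoing edge of $x$ and therefore lead to a descendant of $x$ in $\mathcal{G}$; by assumption no such descendant lies in $C$. Hence $x$ and $C$ are $d$-separated in $\mathcal{G}_{\overline{x}}$, and by the truncated factorization formula for the interventional distribution (equivalently, Rule 3 of do-calculus), the marginal of $C$ is unaffected by the intervention.

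For the second equality I would use condition (ii): $C$ blocks every back-door path from $x$ to $y$. Here the relevant mutilated graph is $\mathcal{G}_{\underline{x}}$, obtained by deleting every edge emanating from $x$. Any path from $x$ to $y$ in $\mathcal{G}_{\underline{x}}$ must begin with an incoming edge to $x$, so it is a back-door path in $\mathcal{G}$; by (ii) every such path is blocked by $C$. Thus $x$ and $y$ are $d$-separated given $C$ in $\mathcal{G}_{\underline{x}}$, and Rule 2 of do-calculus permits exchanging the intervention for conditioning, giving $p(y \mid \operatorname{do}(x), C) = p(y \mid x, C)$. Substituting both equalities into the total-probability expansion yields \eqref{eq:backdoor}.

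The main obstacle is the careful bookkeeping of the two mutilated graphs $\mathcal{G}_{\overline{x}}$ and $\mathcal{G}_{\underline{x}}$ and the verification that the conditions in Definition \ref{def:basics} imply precisely the $d$-separation statements needed to invoke the relevant rules of do-calculus; once these are established, the algebraic steps are immediate. An alternative, more self-contained route is to start directly from the truncated factorization $p^{\operatorname{do}(x)}(v) = \prod_{V \neq X} p(v \mid \operatorname{pa}(v))$, marginalize all non-ancestors of $y$ outside $C \cup \{x,y\}$, and regroup the factors using condition (i) to recover $p(C)$ and condition (ii) to recover $p(y \mid x, C)$; this avoids explicit appeal to do-calculus at the cost of a longer sum-product manipulation.
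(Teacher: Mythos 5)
The paper does not actually prove this statement: Theorem \ref{th:backdoor} is imported as a classical result, attributed to Pearl (1995), and used as a black box in the proofs of Propositions \ref{prop:immediate} and \ref{prop:recurring}. So there is no in-paper argument for you to match; what you have written is the standard derivation of the back-door adjustment formula, and it is essentially correct. Your decomposition $p(y \mid \operatorname{do}(x)) = \sum_{C} p(y \mid \operatorname{do}(x), C)\, p(C \mid \operatorname{do}(x))$, followed by Rule 3 (using condition (i)) and Rule 2 (using condition (ii)) of do-calculus, is exactly how the result is proved in the causal-inference literature, and the alternative route via truncated factorization that you sketch is also standard. Two small points deserve tightening if you write this out in full. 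First, for $p(C \mid \operatorname{do}(x)) = p(C)$ you argue only about directed paths from $x$ to $C$ in $\mathcal{G}_{\overline{x}}$; $d$-separation concerns all paths, so you should add that with an empty conditioning set any open path is collider-free, and a collider-free path leaving $x$ along an outgoing edge (the only kind available in $\mathcal{G}_{\overline{x}}$) is necessarily directed, hence ends in a descendant of $x$, which condition (i) excludes from $C$. Second, condition (ii) asserts blocking in $\mathcal{G}$, while Rule 2 needs blocking in $\mathcal{G}_{\underline{x}}$; this transfer is valid because deleting edges removes no paths from consideration other than those deleted and only shrinks descendant sets of colliders, so a path blocked by $C$ in $\mathcal{G}$ remains blocked in $\mathcal{G}_{\underline{x}}$, but the step should be stated rather than left implicit.
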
  

In the following, let $S_t = \{y_{1:t}, x_{1:t}, v_{1:t}, w_{1:t}\}$, and denote $p(z_{t} \mid x_{t}, w_{t}, z_{t-1}, y_{t-1})$ as $a(z_{t})$ and $p(y_{t} \mid z_{t}, x_{t}, v_{t}, y_{t-1})$ as $b(y_{t})$ for short. Proposition \ref{prop:immediate} considers a case where the interest is in the immediate effect of intervention at time $t$ on outcome $y_t$.

\begin{prop}[Immediate causal effects]\label{prop:immediate}
Let $\mathcal{G}$ be the causal graph defined in either \autoref{fig:fanhmmA} or \autoref{fig:fanhmmB}, and define $C=\{v_t, w_t, z_{t-1}, S_{t-1}\}$. Then, the causal effect (i.e., interventional distribution) $p(y_t \mid \operatorname{do}(x_t))$ is identifiable from $\mathcal{G}$ as
\begin{equation}\label{eq:dox}
\begin{aligned}
p(y_t\mid \operatorname{do}(x_t)) =& \sum_{C} p(y_t \mid x_t, C)p(C)\\
=& \sum_{C, z_t} p(y_t \mid z_t, x_t, v_t, y_{t-1})p(z_t \mid x_t, w_t, z_{t-1}, y_{t-1})p( v_t, w_t, z_{t-1},S_{t-1}),
\end{aligned}
\end{equation}
while $p(y_t \mid \operatorname{do}(v_t))$ and $p(y_t \mid \operatorname{do}(w_t))$ are obtained by replacing $x_t$ with $v_t$ and $w_t$, respectively, in $C$ and \autoref{eq:dox}.
\end{prop}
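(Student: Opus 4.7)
The plan is to apply Theorem~\ref{th:backdoor} with the candidate adjustment set $C=\{v_t,w_t,z_{t-1},S_{t-1}\}$, verify both conditions of Definition~\ref{def:basics}(4) in the graph of Figure~\ref{fig:fanhmmA} (and then check that the extra covariate-lag edges in Figure~\ref{fig:fanhmmB} do not disrupt the argument), and reduce the resulting adjustment formula to the stated two-factor form by introducing a sum over $z_t$ and invoking the local conditional independences of the FAN-HMM.

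Condition (i) is immediate: each element of $C$ either carries a time index at most $t-1$ or sits at time $t$ without an incoming edge from $x_t$ (namely $v_t$ and $w_t$), so none of them can be a descendant of $x_t$. For condition (ii), I would enumerate back-door paths from $x_t$ to $y_t$ by the first edge leaving $x_t$. The one-step paths $x_t\leftarrow u_t\to v_t\to y_t$ and $x_t\leftarrow u_t\to w_t\to z_t\to y_t$ are blocked at $v_t$ and $w_t$. Paths that travel into the past along the $u$-chain must eventually leave it through some observed covariate $x_s$, $v_s$, or $w_s$ with $s<t$, all of which lie in $S_{t-1}$, or through the $z$-chain, which they meet at $z_{t-1}\in C$. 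In Figure~\ref{fig:fanhmmB}, paths beginning $x_t\leftarrow x_{t-1}\to\cdots$ are blocked at $x_{t-1}\in S_{t-1}$.

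The hard part is handling the colliders opened by conditioning on $S_{t-1}$. For each $s\le t-1$, the node $y_s$ is a collider with parents $z_s$, $x_s$, $v_s$, $y_{s-1}$, and conditioning on $y_s$ is in turn a descendant-conditioning that opens the colliders at $z_s$. I would argue structurally that each opened collider is flanked by parents already in $C$: the parents of $y_s$ other than $z_s$ all lie in $S_{t-1}$, while the parents of $z_s$ that leave the $z$-chain are either $x_s,w_s,y_{s-1}\in S_{t-1}$ or, for $s=t-1$, the node $z_{t-1}\in C$ itself. Hence any path that attempts to exploit an opened collider is blocked by a flanking non-collider in $C$ before it can reach $y_t$.

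Once validity is established, Theorem~\ref{th:backdoor} gives $p(y_t\mid\operatorname{do}(x_t))=\sum_{C}p(y_t\mid x_t,C)p(C)$. I would then write $p(y_t\mid x_t,C)=\sum_{z_t}p(y_t\mid z_t,x_t,C)\,p(z_t\mid x_t,C)$ and apply two local d-separation statements in the original graph: conditional on $\{z_t,x_t,v_t,y_{t-1}\}$, $y_t$ is independent of the remaining elements of $C$, yielding $p(y_t\mid z_t,x_t,C)=b(y_t)$; conditional on $\{x_t,w_t,z_{t-1},y_{t-1}\}$, $z_t$ is independent of the remaining elements of $C$, yielding $p(z_t\mid x_t,C)=a(z_t)$. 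Substituting produces Equation~\eqref{eq:dox}. The symmetric claims for $\operatorname{do}(v_t)$ and $\operatorname{do}(w_t)$ follow by relabeling: the adjustment set for $\operatorname{do}(v_t)$ is $\{x_t,w_t,z_{t-1},S_{t-1}\}$ and that for $\operatorname{do}(w_t)$ is $\{x_t,v_t,z_{t-1},S_{t-1}\}$, and the back-door enumeration and factor reduction proceed identically, since $v_t$ and $w_t$ share with $x_t$ the same parent structure and play analogous roles relative to $b(\cdot)$ and $a(\cdot)$.
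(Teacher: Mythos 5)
Your proposal is correct and follows essentially the same route as the paper: verify the back-door criterion for $C$ (condition (i) by time-indexing, condition (ii) by blocking the paths through $u$ at $v_t$, $w_t$, $z_{t-1}$, and $S_{t-1}$), apply Theorem~\ref{th:backdoor}, and then obtain the second line of \autoref{eq:dox} by marginalizing over $z_t$ and invoking the local Markov property for $y_t$ and $z_t$. The paper's proof states these steps more tersely (asserting validity of $\{v_t,w_t,z_{t-1},y_{t-1}\}$ and noting that adding ancestors preserves it), whereas you spell out the path enumeration and the collider-opening issue from conditioning on $S_{t-1}$; the substance is the same.
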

\begin{proof}
By Definition \ref{def:basics}, set $\{v_t, w_t, z_{t-1}, y_{t-1}\} \in C$ is a valid adjustment set in $\mathcal{G}$, and adding additional ancestors of these nodes to $C$ does not violate the conditions (i) and (ii). Therefore $C$ is valid adjustment set and back-door adjustment of Theorem \ref{th:backdoor} is applicable. 
The second equality in \autoref{eq:dox} follows from standard conditional independence relations in $\mathcal{G}$ and marginalization of $z_t$. The same reasoning applies analogously to $p(y_t \mid \operatorname{do}(v_t))$ and $p(y_t \mid \operatorname{do}(w_t))$.
\end{proof}

The reason for adjusting on the whole past $S_{t-1}$ will become evident in \autoref{sec:causalestimation}. The following proposition generalizes Proposition \ref{prop:immediate} to a case of where the interest is in the joint distribution of $y_{t+k}$ and $z_{t+k}$ after recurring interventions on $x_{t:t+k}$. The results for $\operatorname{do}(v)$ and $\operatorname{do}(w)$ are analogous. 

\begin{prop}[Long term causal effects of recurring interventions]\label{prop:recurring}
Let $\mathcal{G}$ be the causal graph defined in either \autoref{fig:fanhmmA} or \autoref{fig:fanhmmB}, $k \geq 0$, and $C = \{v_{t:t+k}, w_{t:t+k}, z_{t-1}, S_{t-1}\}$. Then the causal effect $p(y_{t+k}, z_{t+k} \mid \operatorname{do}(x_{t:t+k}))$ is identifiable from $\mathcal{G}$ as
\begin{equation}\label{eq:doxt}
\begin{aligned}
p(y_{t+k}, z_{t+k} \mid \operatorname{do}(x_{t:t+k})) =& \sum_{C}  p(y_{t+k}, z_{t+k} \mid x_{t:t+k}, C) p(C)\\
=& \sum_{C} \left[\sum_{\substack{y_{t:t+k-1},\\z_{t:t+k-1}}} \prod_{j=t}^{t+k} b(y_j) a(z_j)\right] p(C).
\end{aligned}
\end{equation}
Furthermore,
\begin{equation}
p(y_{t+k} \mid \operatorname{do}(x_{t:t+k})) = \sum_{z_{t+k}} p(y_{t+k}, z_{t+k} \mid \operatorname{do}(x_{t:t+k})),
\end{equation}
and
\begin{equation}
p(y_{t+k} \mid \operatorname{do}(x_{t:t+k}), z_{t+k}) = \frac{p(y_{t+k}, z_{t+k} \mid \operatorname{do}(x_{t:t+k}))}{\sum_{y_{t+k}} p(y_{t+k}, z_{t+k} \mid \operatorname{do}(x_{t:t+k}))}.
\end{equation}
\end{prop}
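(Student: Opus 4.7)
The plan is to extend the reasoning of Proposition \ref{prop:immediate} from a single intervention to the sequence of interventions $\operatorname{do}(x_{t:t+k})$. Because Theorem \ref{th:backdoor} is stated for a single $\operatorname{do}$ operation, I would first invoke the sequential (time-ordered) version of the back-door criterion, or equivalently view the identity as an iterated application of do-calculus rule 2 along the temporal ordering of the graph. Concretely, I would process the interventions $x_t, x_{t+1}, \ldots, x_{t+k}$ one by one, showing at each step $j$ that the set $C_j = \{v_{t:t+k}, w_{t:t+k}, z_{t-1}, S_{t-1}\} \cup \{x_{t:j-1}\}$ (i.e., $C$ augmented with the already-fixed past interventions) satisfies the two conditions of Definition \ref{def:basics} for intervention $x_j$ on outcome $(y_{t+k}, z_{t+k})$.

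For the two conditions: (i) no variable in $C$ is a descendant of any $x_j$ with $j \geq t$, because in both \autoref{fig:fanhmmA} and \autoref{fig:fanhmmB} the covariates $v_{s}, w_{s}$ receive arrows only from the unobserved $u_{s}$ chain and (in \autoref{fig:fanhmmB}) their own lags, never from any $x$, $y$, or $z$; and $z_{t-1}$, $S_{t-1}$ are temporal predecessors of $x_t$. (ii) For blocking, every back-door path out of an intervened $x_j$ must leave through $u_j$ (the only parent of $x_j$) and then reach $(y_{t+k}, z_{t+k})$. Such a path either reaches a $w_{j'}$ or $v_{j'}$ in the intervention window (blocked because those are in $C$ and act as non-colliders), travels through the $u$-chain to a prior $u_{j-1}, u_{j-2}, \ldots$ and down to some $x_{j'}$ or $v_{j'}, w_{j'}, y_{j'-1}, z_{j'-1}$ with $j' \leq j-1$ (blocked by $v_{t:t+k}$, $w_{t:t+k}$, $S_{t-1}$, $z_{t-1}$, or by the already-conditioned intervention values $x_{t:j-1}$), or re-enters via a collider that has no descendant in $C$. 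Enumerating these cases and noting that each $y_{j-1}, z_{j-1}$ on an open back-door entering $x_j$ is itself forced into $C$ through $S_{t-1}$ or the recursion is the careful part of the argument.

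Once the sequential back-door holds, the first equality in \eqref{eq:doxt} follows by the adjustment formula (iterated). For the second equality I would apply the chain rule to $p(y_{t+k}, z_{t+k} \mid x_{t:t+k}, C)$ in the temporal order $z_t, y_t, z_{t+1}, y_{t+1}, \ldots, z_{t+k}, y_{t+k}$, insert summations over the intermediate $y_{t:t+k-1}, z_{t:t+k-1}$, and then simplify each factor using the local Markov properties of the FAN-HMM read off the DAG:
\begin{equation*}
p(z_j \mid \text{past and conditioning}) = p(z_j \mid x_j, w_j, z_{j-1}, y_{j-1}) = a(z_j),
\end{equation*}
\begin{equation*}
p(y_j \mid \text{past and conditioning}) = p(y_j \mid z_j, x_j, v_j, y_{j-1}) = b(y_j),
\end{equation*}
which uses that given $(z_j, x_j, v_j, y_{j-1})$ the observation $y_j$ is $d$-separated from all earlier states, outcomes, and additional covariates, and analogously for $z_j$. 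Collecting the factors gives the product $\prod_{j=t}^{t+k} b(y_j)\,a(z_j)$.

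The remaining two identities are immediate: marginalizing \eqref{eq:doxt} over $z_{t+k}$ gives $p(y_{t+k} \mid \operatorname{do}(x_{t:t+k}))$, while dividing by $\sum_{y_{t+k}} p(y_{t+k}, z_{t+k} \mid \operatorname{do}(x_{t:t+k}))$ yields the conditional $p(y_{t+k} \mid \operatorname{do}(x_{t:t+k}), z_{t+k})$ by the definition of conditional probability under the interventional distribution. The main obstacle will be the bookkeeping in step (ii): enumerating back-door paths in a graph with time-indexed $u$-confounders, feedback arrows $y_{j-1} \to z_j$ and $y_{j-1} \to y_j$, and (in \autoref{fig:fanhmmB}) autoregressive covariate edges, and verifying that every such path is blocked either by $\{v_{t:t+k}, w_{t:t+k}, z_{t-1}, S_{t-1}\}$ or by the sequentially-fixed earlier interventions — this is what forces the inclusion of the entire past $S_{t-1}$ into the adjustment set.
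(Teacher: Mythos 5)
Your proposal is correct and takes essentially the same route as the paper: the paper verifies the back-door criterion for the joint intervention set $x_{t:t+k}$ with exactly your $C$ (non-descendance, plus blocking of paths through $u_{t:t+k}$ by $v_{t:t+k}, w_{t:t+k}$ and of the remaining back-door paths through $u_{t-1}$ by $y_{t-1}, z_{t-1}$, the rest of $S_{t-1}$ being harmless non-descendants), and then obtains the second equality by factorizing $p(y_{t+k}, z_{t+k} \mid x_{t:t+k}, C)$ via the model's conditional independences and marginalizing the intermediate $y_{t:t+k-1}, z_{t:t+k-1}$, with the last two displays following by marginalization and the definition of conditional probability. Your sequential one-intervention-at-a-time peeling is an equivalent but more elaborate packaging of this single joint adjustment step (one small imprecision: in \autoref{fig:fanhmmB}, $u_j$ is not the only parent of $x_j$, since $x_{j-1} \to x_j$ also opens a back-door path, though it is blocked by the conditioned $x_{t:j-1}$ or $S_{t-1}$ exactly as your later case analysis indicates).
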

\begin{proof}
As in the proof of \ref{prop:immediate}, we verify that the set $C$ satisfies the back-door criterion: (i) None of the nodes $x_{t:t+k}$ are ancestors of any nodes of $C$. (ii) Nodes $w_{t:t+k}$ and $v_{t:t+k}$ block all back-door paths from $x_{t:t+k}$ to $y_{t+k}$ and $z_{t+k}$ that pass through $u_{t:t+k}$, while $y_{t-1}$ and $z_{t-1}$ block the remaining back-doors paths that go through $u_{t-1}$. Additional variables in $S_{t-1}$ are not descendants of the intervention variables and do not open any back-doors paths. The second equality of \autoref{eq:doxt} follows from factorization of $p(y_{t+k} \mid x_{t:t+k}, C)$ and marginalization of intermediate responses $y_{t:t+k-1}$ and hidden states $z_{t:t+k-1}$. 
\end{proof}

Note that while the values of $x_{t:t+k}$ in $p(y_{t+k}, z_{t+k} \mid \operatorname{do}(x_{t:t+k}))$ can vary (e.g., $\operatorname{do}(x_t = 0, x_{t+1}=1)$), this sequence of interventions is fixed at time $t$. That is, the intermediate response values $y_t$ cannot affect subsequent interventions $\operatorname{do}(x_{t+1})$. This differs from so-called dynamic treatment regimes \citep{Murphy2003} where at each time point the intervention is adjusted based on the current information. 

While the results so far hold for both DAGs of \autoref{fig:causaldag}, the following corollary demonstrates how direct effect of $x_{t-1}$ on $x_t$ affects the resulting adjustment formula.
 
\begin{coro}[Long-term causal effects of atomic interventions]\label{prop:atomic}
Let $\mathcal{G}_1$ be the causal graph defined in \autoref{fig:fanhmmA} and $\mathcal{G}_2$ be \autoref{fig:fanhmmB}. Then $p(y_{t+k} \mid \operatorname{do}(x_t))$ is identifiable from $\mathcal{G}_1$ as
\begin{equation}\label{eq:doxt1}
\begin{aligned}
p(y_{t+k} \mid \operatorname{do}(x_{t})) =& \sum_{C_1}  p(y_{t+k} \mid x_{t}, C_1) p(C_1)\\
=& \sum_{C_1} \left[\sum_{\substack{y_{t:t+k-1},\\z_{t:t+k}}} \prod_{j=t}^{t+k} b(y_j) a(z_j)\right] p(C_1),
\end{aligned}
\end{equation}
where $C_1 = \{x_{t+1:t+k}, w_{t:t+k}, v_{t:t+k}, z_{t-1}, S_{t-1}\}$. 
For $\mathcal{G}_2$,
\begin{equation}\label{eq:doxt2}
\begin{aligned}
p(y_{t+k} \mid \operatorname{do}(x_{t})) =& \sum_{C_2}  p(y_{t+k} \mid x_{t}, C_2) p(C_2)\\
=& \sum_{C_2} \left[\sum_{\substack{y_{t:t+k-1},\\z_{t:t+k},\\x_{t+1:t+k}}} \prod_{j=t}^{t+k} b(y_j) a(z_j)\prod_{j=t+1}^{t+k}p(x_{j} \mid x_{j-1})\right] p(C_2),
\end{aligned}
\end{equation}
where $C_2 = \{w_{t:t+k}, v_{t:t+k}, z_{t-1}, S_{t-1}\}$. 
\end{coro}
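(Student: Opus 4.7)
The plan is to mirror the strategy used in Proposition \ref{prop:recurring}: verify that $C_1$ and $C_2$ satisfy the back-door criterion of Definition \ref{def:basics}, apply Theorem \ref{th:backdoor}, and then factorize the post-intervention conditional using the local structure of the respective DAG. The key structural difference compared to Proposition \ref{prop:recurring} is that the intervention is now \emph{atomic}: only the incoming edges to $x_t$ are severed, leaving $x_{t+1:t+k}$ as ordinary random variables whose status (descendant of $x_t$ or not) depends entirely on whether the graph contains the direct edges $x_{j-1}\to x_j$.

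First I would handle $\mathcal{G}_1$. Because $\mathcal{G}_1$ contains no direct edges among the $x_j$'s, the nodes $x_{t+1:t+k}$ are not descendants of $x_t$ (their only connection to the $x$-past runs through the latent $u$-chain), so including them in $C_1$ does not violate condition (i). For (ii), I would enumerate the back-door paths from $x_t$ to $y_{t+k}$: paths through $u_t$ are blocked by $v_t$ and $w_t$ (and, for future times, by $v_{t+1:t+k}, w_{t+1:t+k}, x_{t+1:t+k}$ which are all non-colliders on the $u$-chain after conditioning), while any path leading into the pre-intervention past is blocked by $z_{t-1}, y_{t-1}$ and the remaining elements of $S_{t-1}$. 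Once validity is established, Theorem \ref{th:backdoor} yields the back-door formula, and the factorization inside the bracket follows exactly as in Proposition \ref{prop:recurring}, because conditioning on $x_{t+1:t+k}$ collapses the $a(z_j)$ and $b(y_j)$ terms in the same way.

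For $\mathcal{G}_2$ the argument diverges. The new edges $x_{j-1}\to x_j$ make $x_{t+1:t+k}$ descendants of $x_t$, so they must be removed from the adjustment set, leaving $C_2$ as stated. I would verify validity analogously: the conditioning set $\{v_{t:t+k},w_{t:t+k}\}$ together with $\{z_{t-1},y_{t-1},S_{t-1}\}$ still blocks every back-door path through $u_{t-1:t+k}$, since the extra $x_{j-1}\to x_j$ edges do not open any new path from $x_t$ into $y_{t+k}$ that bypasses $u_j$, and the $u$-mediated paths remain blocked by $v_j,w_j$. Applying the back-door formula gives $p(y_{t+k}\mid \operatorname{do}(x_t))=\sum_{C_2} p(y_{t+k}\mid x_t,C_2)p(C_2)$, and the remaining task is to expand $p(y_{t+k}\mid x_t,C_2)$ by marginalizing over the missing intermediates $y_{t:t+k-1}$, $z_{t:t+k}$, and $x_{t+1:t+k}$. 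The model factors $b(y_j)$ and $a(z_j)$ arise from the FAN-HMM local conditional independences as before, and each $x_j$ contributes a factor corresponding to its conditional distribution given the other variables already in the conditioning set; under the Markov structure among the $x$'s in $\mathcal{G}_2$ (after the $u$-mediated dependencies are accounted for by conditioning on $v_{t:t+k}, w_{t:t+k}$ and the past), this reduces to $p(x_j\mid x_{j-1})$.

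The main obstacle I anticipate is precisely this last step: rigorously justifying the $\prod_{j=t+1}^{t+k} p(x_j\mid x_{j-1})$ factorization in $\mathcal{G}_2$, since the latent $u_j$ confounds consecutive $x_j$'s and cannot itself be conditioned on. I would address this by a careful d-separation argument in the induced subgraph over $\{x_{t+1:t+k}\}\cup C_2\cup\{x_t\}$, showing that once $v_{j},w_{j}$ (and the complete past $S_{t-1}$) are held fixed, the residual dependence of $x_j$ on variables earlier than $x_{j-1}$ is blocked, so the chain rule on $x_{t+1:t+k}$ collapses to the stated Markov product. The two factorizations then combine with the back-door sum over $C_2$ to give \eqref{eq:doxt2}, completing the argument.
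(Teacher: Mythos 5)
Your treatment of $\mathcal{G}_1$ is correct and is essentially what the paper intends: the paper's entire proof is the remark that the result ``follows directly from Proposition~\ref{prop:recurring}'', the point being that without the edges $x_{j-1}\to x_j$ the intermediate treatments $x_{t+1:t+k}$ are non-descendants of $x_t$ and can simply be absorbed into the adjustment set, after which the back-door verification and the $\prod_j b(y_j)a(z_j)$ factorization of Proposition~\ref{prop:recurring} carry over unchanged. Your explicit enumeration of the back-door paths adds detail the paper omits, but it is the same argument.

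The $\mathcal{G}_2$ half contains a genuine gap. You correctly isolate the delicate step --- justifying the factor $\prod_{j=t+1}^{t+k}p(x_j\mid x_{j-1})$ --- but the d-separation claim you propose to close it with is false in the stated graph. The issue is not dependence of $x_j$ on ``variables earlier than $x_{j-1}$'': it is the contemporaneous fork $x_j \leftarrow u_j \to v_j$ (and $u_j \to w_j$), together with the latent chain $u_{t-1}\to u_t\to\cdots$, which keeps $x_j$ d-connected to $v_{j'}$ and $w_{j'}$ at every time point no matter what observed variables you condition on, because the $u$'s are latent and never enter the conditioning set. Consequently the chain-rule factor is $p(x_j\mid x_{j-1}, v_{t:t+k}, w_{t:t+k},\dots)$, which does not reduce to $p(x_j\mid x_{j-1})$, and the ``careful d-separation argument in the induced subgraph'' you anticipate cannot deliver the stated Markov product. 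To make \eqref{eq:doxt2} exact one must either assume away the $u\to x$ confounding of the intermediate treatments or replace $p(x_j\mid x_{j-1})$ by the appropriate conditional given the observed covariates; note that the paper itself only uses the $\mathcal{G}_2$ case qualitatively, to argue that one must additionally model the treatment process and therefore leave the pure FAN-HMM framework.
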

\begin{proof}
The result follows directly from Proposition \ref{prop:recurring}.
\end{proof}

Corollary \ref{prop:atomic} shows that when there are no direct dependencies between covariates over time, the effects of atomic interventions can be identified as in the same way as recurring interventions, simply by including the intermediate values $x_{t+1:t+k}$ in the adjustment set. However, when such dependencies are present (as in \autoref{fig:fanhmmB}), intervening on $x_t$ affects also $x_{t+1}$, which necessitates modeling also the condtional distribution $p(x_{t+1} | x_t)$. Thus, in this case we can no longer operate solely under the FAN-HMMs considered here.

Naturally the graphs of \autoref{fig:causaldag} are just two possible DAGs depicting the true data-generating process (DGP). However in practice, they are likely more realistic than assuming that $x_t$, $v_t$ and $w_t$ are completely independent, both mutually and temporally. Importantly, assuming either of the DAGs in \autoref{fig:causaldag} does not bias the causal effect estimates if the true DGP corresponds to the graph without the dashed components. This also holds if the DGP follows an NHMM without outgoing arrows from $y_t$.  In summary, while it is always possible to perform predictions of $y_t$ given the current or past values of the model covariates, domain knowledge about the underlying causal graph is essential when assessing whether such predictions can be interpreted causally. 

Here the focus was on causal effects of observables, i.e., effects of covariates (including past responses) on the responses, but technically we could also be consider cases where the variable targeted by the intervention is a latent state. However, because the interpretation of the hidden states depends on the (time-varying) transition and emission probabilities typically estimated from the data, it is in general challenging conceptualize the meaning of such interventions, unless one has very strong prior knowledge on the number of hidden states and their structure. \citet{Zucchini2009} warn against over-interpreting hidden Markov models and latent variable models in general, an issue which should be kept in mind if first inferring the hidden states from the data, and subsequently conducting causal inference on those states.

\subsection{Estimating the causal effects under FAN-HMM}\label{sec:causalestimation}

I now consider the practical estimation of the causal effects discussed in \autoref{sec:causality}. Throughout this section, I assume that we have data on $y_t, x_t, v_t,w_t$, for $t=1,\ldots,T$ and $N$ individuals, generated by a FAN-HMM with a known number of hidden states and model parameters $\boldsymbol{\gamma}$ as given in \autoref{eq:softmax}. 

Instead of explicitly modelling the distribution of the adjustment set $C$, it is typically approximated by the empirical distribution of the data \citep{Hernan2020}. However, in the context of NHMMs, $C$ typically includes latent variable $z$. Therefore, to use the results from \autoref{sec:causality}, we need to compute the conditional distributions of the latent states while, in case of long-term effects, marginalizing over the intermediate states and responses. 

Denote transition and emission matrices at time $t$ given observation $y_{t-1}$ as $\mathbf{A}_t(y_{t-1})$ and $\mathbf{B}_t(y_{t-1})$. If the model does not contain edge $y_{t-1} \to z_t$, then $\mathbf{A}_t(y_{t-1}) = \mathbf{A}_t$, and similarly for emissions if $y_{t-1} \not\to y_t$. Furthermore, let $\alpha_{t}=p(z_t, y_{1:t})$ and $\bar \alpha_{t}=p(z_t \mid y_{1:t})$ be the unnormalized and normalized forward variables, where the dependency on covariates $x, v, w$ is implicit. Additionally, let $\mathbf{e}_{i}$ be the basis vector of length $M$ where $e(j) = 1$ if $j=i$ and $0$ otherwise, and $\mathbf{1}_n = (1,\ldots,1)^\top \in \mathbb{R}^n$. Based on this notation, Algorithm \ref{alg:forward} defines the forward algorithm for FAN-HMM, which outputs $\mathbf{D}_t(m, s) = p(y_t = m, z_t = s \mid S_t\setminus y_t)$. By modifying the relevant transition and emission matrices and computing $\mathbf{D}_t$ for every sequence, we get
\begin{equation}\label{eq:est_doxt}
\begin{aligned}
p(y_{t}, z_{t} \mid \operatorname{do}(x_t)) &= 
 \sum_{C}  p(y_t, z_t \mid x_t, C) p(C)\\
&\approx \frac{1}{N}\sum_{i=1}^N \mathbf{D}_{t, i},
\end{aligned}
\end{equation}
where $C = \{v_{t}, w_{t}, z_{t-1}, S_{t-1}\}$. The long-term effects are obtained similarly by setting $y_{t:t+k-1}$ missing in Algorithm \ref{alg:forward}, giving us $\mathbf{D}_{t+k}(m, s) = p(y_{t+k} = m, z_{t+k} = s \mid  S_{t+k} \setminus \{y_{t:t+k-1}\})$. The marginals $p(y_{t+k} \mid \operatorname{do}(x_{t:t+k}))$ and conditionals $p(y_{t+k} \mid \operatorname{do}(x_{t:t+k}), z_{t+k} = s)$ are then obtained as (scaled) row sums of the resulting matrix $\mathbf{D}_{t+k}$. Therefore, in the case of categorical response variables, we can estimate various short- and long-term causal effects accurately and computationally efficient way without needing to rely on for example Monte Carlo simulation employed in dynamic multivariate panel models of \citet{Helske2024_dmpm}. For uncertainty intervals, the nonparametric bootstrap of individuals yields samples of the model coefficients, which can be used to compute \autoref{eq:est_doxt} repeatedly for bootstrap samples of $p(y_{t}, z_{t} \mid \operatorname{do}(x_t))$. The uncertainty of $P(C)$ can also be accounted for by using the bootstrapped data in marginalization.

\begin{algorithm}[!ht]
\caption{Forward algorithm for FAN-HMM}\label{alg:forward}
\renewcommand{\gets}{=}
\DontPrintSemicolon
\eIf{FAN-HMM has an edge $y_t \to y_{t+1}$}{
    $\mathbf{D}_1 \gets \boldsymbol{\pi} \boldsymbol{e}_{y_1}^\top$\;
}{
    $\boldsymbol{D}_1 \gets \operatorname{diag}(\boldsymbol{\pi})\mathbf{B}_1$\;
}
\eIf{$y_1$ is observed}{
    $\boldsymbol{\alpha}_1 \gets \boldsymbol{D}_1 \boldsymbol{e}_{y_1}$\; 
}{
    $\boldsymbol{\alpha}_1 \gets \boldsymbol{D}_1 \boldsymbol{1_M}$\;
}
$\boldsymbol{\bar\alpha}_1 \gets \boldsymbol{\alpha}_1 / \sum \boldsymbol{\alpha}_1$\;
\For{$t = 2$ \KwTo $T$}{
    \eIf{$y_{t-1}$ is observed}{
        $\boldsymbol{D}_t \gets \operatorname{diag}(\mathbf{A}_t^\top(y_{t-1}) \boldsymbol{\bar\alpha}_{t-1})\mathbf{B}_t(y_{t-1})$\;
    }{
        $\boldsymbol{D}_t \gets \sum_{m=0}^M \operatorname{diag}(\mathbf{A}_t^\top(m) \boldsymbol{\bar\alpha}_{t-1}) \mathbf{B}_t(m) \boldsymbol{1^\top_S}\boldsymbol{D}_{t-1} \boldsymbol{e}_m$\;
    }
    \eIf{$y_t$ is observed}{
        $\boldsymbol{\alpha}_t \gets \boldsymbol{D}_t \boldsymbol{e}_{y_t}$\;
    }{
        $\boldsymbol{\alpha}_t \gets \boldsymbol{D}_t \boldsymbol{1_M}$\;
    }
    $\boldsymbol{\bar\alpha}_t \gets \boldsymbol{\alpha}_t / \sum \boldsymbol{\alpha}_t$\;
}
\end{algorithm}

\section{Simulation experiments}\label{sec:experiments}

I first study the parameter estimation of FAN-HMM, and how multimodality of the likelihood depends on the potential model misspecification, in particular the incorrect specification of the number of hidden states. Section D of the supplementary materials contains further simulation experiments. 

All computations were done in \texttt{R} \citep{rcore}, using the \texttt{seqHMM} \citep{seqHMM}, \texttt{ggplot2} \citep{ggplot2} and \texttt{dplyr} \citep{dplyr} packages.

I simulated synthetic data from FAN-HMM with the true number of hidden states $S=3$, number of observed categories $M = 4$, number of time points $T = 50$, and the number of sequences $N = 1000$. In addition to autoregressive effect of previous observation $y_{t-1}$ on emission probabilities $\mathbf{B}_t$, and feedback effect of current observation $y_t$ to the transition probability matrix $\mathbf{A}_t$, both $\mathbf{A}_t$ and $\mathbf{B}_t$ depend on covariate $x$ simulated as 
\begin{equation}\label{eq:covariates}
\begin{aligned}
x_{i, t} \sim \textrm{N}(u_i + v_i t, 0.1^2),\quad u_{i} \sim \textrm{N}(0, 0.5^2), \quad v_{i} \sim \textrm{Uniform}(-0.05, 0.05).
\end{aligned}
\end{equation}

I fixed the $\eta$ coefficients (see Section A of the supplementary material) corresponding to the intercepts of $\boldsymbol{\pi}$, $\mathbf{A}$, and $\mathbf{B}$ to match the average initial, transition and emission probabilities defined as
\begin{equation}\label{eq:fanhmm_avg}
\pi = \begin{pmatrix}
     0.8  \\
     0.1 \\
     0.1
    \end{pmatrix},
A = \begin{pmatrix}
     & 0.85  & 0.1 & 0.05 \\
     & 0.1 & 0.8 & 0.1\\
     & 0.05 & 0.05 & 0.9
    \end{pmatrix},
B = \begin{pmatrix}
     & 0.8  & 0.05 & 0.1 & 0.05 \\
     & 0.15  & 0.5  & 0.25 & 0.1\\
     & 0.1 & 0.05  & 0.25 & 0.6
    \end{pmatrix}.
\end{equation}

For the coefficients corresponding to the effect of $x$, I chose values from a set $\{-1,0,1\}$, whereas for the autoregressive and feedback effects coefficients were from a set  $\{-0.5,0,0.5\}$. For further details, see the replication codes on Github.

I will first compare estimation of this model using the gradient-based numerical optimization algorithm L-BFGS \citep{LBFGS}, and EM-L-BFGS, where the EM algorithm is run for a maximum of 100 iterations before switching to L-BFGS (plain EM was omitted due to computational cost, but see Section D of the supplementary material). The L-BFGS used the implementation provided by the NLopt library \citep{NLopt}, with analytical gradients (see Section C of the supplementary material).

I estimated the model 400 times using both optimization algorithms, with varying the number of estimated hidden states $S=2,3,4$. For each replication, initial values for $\eta$s were sampled from $N(0, 4)$ distribution based on Maximin Latin hypercube sampling \citep{Johnson1990} using R package \texttt{lhs} \citep{lhs}. Exception to this was the expected values of $\eta$s corresponding to $A$, which were set so that $\mathbb{E}(a_{ij}) = 1-0.05(S-1)$ for $i=j$ and 0.05 otherwise. This was done to reflect the fact that often we have some prior idea on the stability of the state trajectories. I also tested the effect of L2 regularization by varying $\lambda \in {0, 0.1}$ in the penalized log-likelihood $\log L(\eta) - \frac{\lambda}{2}| \eta |_2^2$, which can alleviate numerical issues due to extreme probabilities.

I defined success in converging to the global optimum $\max(\log L)$ as achieving a log-likelihood $\log L$ value that satisfies $\lvert \log L - \max(\log L) \rvert < 10^{-5} |\max( \log L)|$, across all replications and estimation methods for a given $S$ and $\lambda$. 

\autoref{tab:optimization_fanhmm} shows the success rates and average computation times for L-BFGS and EM-L-BFGS. Without regularization, both methods, but especially L-BFGS, struggles at converging to the global optimum. The regularization $\lambda = 0.1$ improves the convergence, except when overestimating the number of hidden states.

\begin{table}[!ht]
\centering
\begin{tabular}{llccccc}
 & & \multicolumn{2}{c}{$\lambda = 0$} && \multicolumn{2}{c}{$\lambda = 0.1$}\\
  \cline{3-4} \cline{6-7} \\[-1em] 
S & Method & Success (\%) & Time (s) && Success (\%) & Time (s) \\
\hline
2 & L-BFGS      & 5   & 37    && 40  & 68 \\
  & EM-L-BFGS   & 39  & 294   && 49  & 296 \\
3 & L-BFGS      & 1   & 188   && 43  & 271 \\
  & EM-L-BFGS   & 10  & 769   && 45  & 772 \\
4 & L-BFGS      & 0   & 538   && 1   & 561 \\
  & EM-L-BFGS   & 1   & 1730  && 0   & 1557 \\
\end{tabular}
\caption{Performance comparison of different estimation methods across varying number of estimated number of hidden states $S$ and regularization $\lambda$ for three-state FAN-HMM.}
\label{tab:optimization_fanhmm}
\end{table}

When examining estimates of the average causal effect (ACE) $P(y_{50} = 2 \mid \operatorname{do}(x_{50} = 1)) - P(y_{50} = 2 \mid \operatorname{do}(x_{50}) = 0)$ in \autoref{fig:ace_y2}, we observe that the model with $S=2$ exhibits clear bias relative to the estimate obtained from the correctly specified model with $S=3$. In contrast, with $S=4$, the estimated ACE fluctuates relatively little around the correct value, despite \autoref{tab:optimization_fanhmm} indicating convergence to the global optimum in fewer than 1\% of cases. This behavior reflects a typical feature of mixture and latent variable models in general: When $S$ is underestimated, the model is unable to adequately capture the time-varying heterogeneity of individuals, whereas overspecification of $S=4$ induces redundant state partitions which can complicate model convergence and parameter inference, but its impact on target estimands such as the ACE can remain negligible.

\begin{figure}[!ht]
    \centering
    \includegraphics[width=1\linewidth]{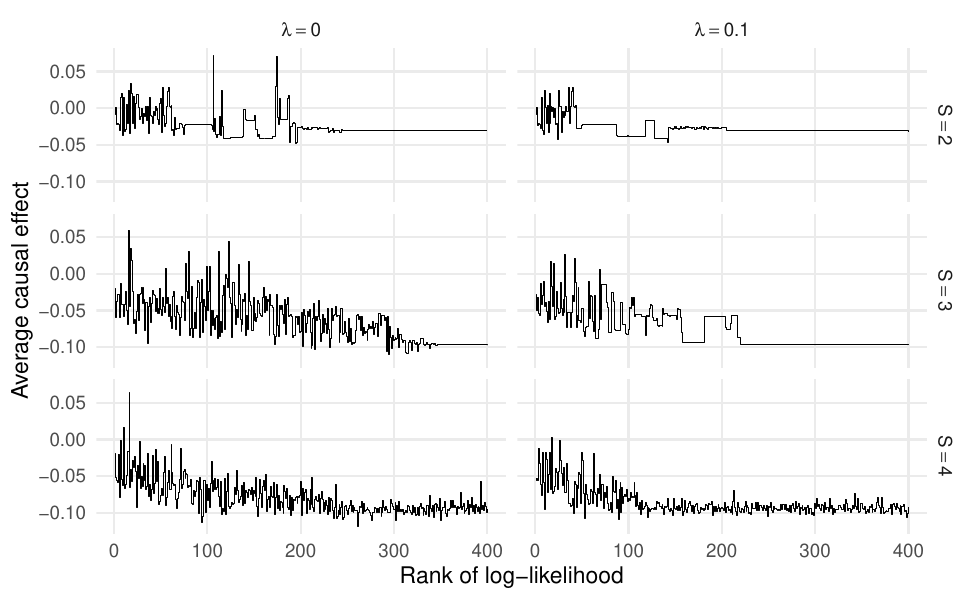}
    \caption{Estimates of $P(y_{50} = 2 \mid \operatorname{do}(x_{50} = 1)) - P(y_{50} = 2 \mid \operatorname{do}(x_{50}) = 0)$ from 400 initializations of three-state FAN-HMM, using EM-L-BFGS, ordered by the log-likelihood values. Rows indicate the number of hidden states $S$, while columns refer to the regularization parameter $\lambda$.}
    \label{fig:ace_y2}
\end{figure}

To further test the estimation of causal effects, I simulated 100,000 sequences from the true model, which I used to compute the ground truth for the ACEs $P(y_t = m \mid \operatorname{do}(x_{46:50} = 1)) - P(y_t = m \mid \operatorname{do}(x_{46:50} = 0))$ for $t=46,\ldots, 50$, $m=1,\ldots,4$. I then simulated 1000 datasets from this model, and for each replication, estimated the model using L-BFGS with $\lambda = 0.1$ and 10 random initializations of the parameters from $N(0, 4)$ distribution to improve the likelihood of finding the global optimum. Then, based on the estimated model, I estimated ACEs with 90\% nonparametric bootstrap confidence intervals of ACEs to test the how close to the nominal coverage the obtained intervals are. For bootstrap, I used only 50 replicates, as the main variation in the covarage probabilities in this experiment is due to the different replicates of data. Naturally for actual applications the number of bootstrap samples should be larger.

\autoref{fig:rmsecoverage} shows how underestimation of number of states $S$ leads to increased root mean square error (RMSE) and confidence intervals with poor coverage, whereas overestimating of $S$ has only a small effect compared to the correctly specified $S$ even though it is likely that the models with $S=4$ converged poorly. Therefore, while correctly specifying $S$ is important for example in case where the interest is in state-specific conditional causal effects, this suggests that when the number of hidden states is not known a priori and the interest causal effects without stratifying by state, it is best to err on the side of overestimation rather than underestimation of the number of hidden states.

\begin{figure}[!ht]
    \centering
    \includegraphics[width=1\linewidth]{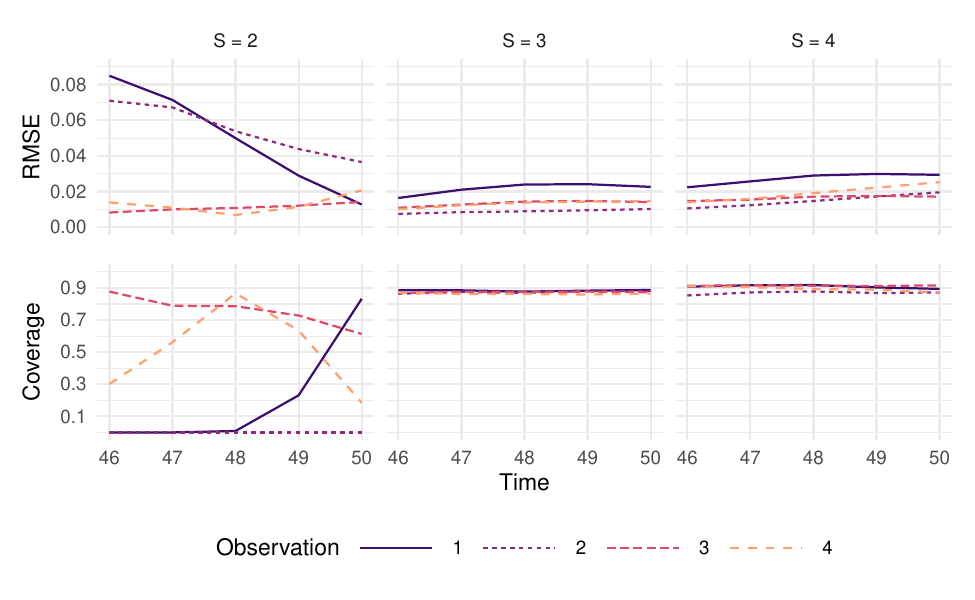}
    \caption{The root mean square error and the coverage of 90\% confidence intervals of the estimated average causal effects from 1000 replications of three-state FAN-HMM.}
    \label{fig:rmsecoverage}
\end{figure}

\section{Application to parental leave reform}

As an empirical illustration of FAN-HMM, I study how the 2013 parental leave reform in Finland affected fathers' parental leave uptake in workplaces in 2010-2017. Fathers' individual entitlements in Finland included of a three-week birth-related paid paternity leave, and a six-week father's quota of paid parental leave. The categorical outcome variable is the leave taking of a father, coded as "no leave", "paternity leave", or "at least father's quota". The main change in the 2013 reform was that the father's quota was no longer conditional on the mother’s consent or shortening her leave. For more details on the context of the Finnish parental leave, see for example \citep{Helske2024_peer} and references therein. 

The data used here are originally sourced from administrative registers compiled by Statistics Finland from the Social Insurance Institution of Finland and the Finnish Institute for Health and Welfare, and was further processed as part of the study by \citet{Helske2024_peer}. Due to data protection laws and regulations, the data are only available from the data holders upon reasonable request pending their permission. However, for illustrative purposes, a small synthetic dataset based on the estimated model and the key characteristics of the original data is available on Github.

I consider 118,469 births for fathers in 12,819 Finnish workplaces with at least five births between 2010 and 2017, up to a maximum of 20 births per workplace. Thus, each observation sequence consists of all fathers having a new child between 2010 and 2017 in a specific workplace, ordered according to the birth date of the children, i.e., the time variable is the rank of the birth in a workplace since January 1 2010. Simple observational level autoregression is likely unable to fully capture the temporal dependencies between fathers due to cumulative effects of changing workplace culture, and as the fathers' parental leave decisions are unevenly spaced in terms of calendar time. Instead of incorporating complex lag-structures and information about time gap between births, I assume that the observations depend also on a latent Markov process capturing time-varying workplace culture, leading to the FAN-HMM.

Using the Wilkinson-Rogers notation \citep{Wilkinson1973}, I assume that for a father $j$ in workplace $i$, the linear predictors for leave taking $y_{j,i}$ and state $z_{j,i}$, with coefficients depending on the current state $z_{j,i}$, are given by 
\begin{equation*}
\begin{aligned}
y_{j,i} &= y_{j-1,i} + x_{j,i} + o_{j,i} + y_{j-1,i} : s_{j,i} + x_{j,i} : o_{j,i},\\
z_{j,i} &= y_{j-1,i} + y_{j-1,i}:x_{j-1,i} + y_{j-1,i}:o_{j-1,i},
\end{aligned}
\end{equation*}
where $a:b$ denotes the interaction effects of $a$ and $b$, $x_{j,i}$ is the reform indicator, $o_{j,i}$ the father's occupational skill level coded as "lower-skill" and "higher-skill", and $s_{j,i}$ is $1$ if $o_{j,i} = o_{j-1,i}$ and 0 otherwise. Additionally, I assume that the initial state probabilities depends on $x_{1,i}$. The corresponding FAN-HMM accounts for the direct effect of reform on the leave probabilities of father $j$ in workplace $i$ ($x_j \to y_j$), as well as indirect effects via $z_j$. Father $j$ might set an example to father $j+1$ ($y_j \to y_{j+1}$), and also change the workplace culture and leave taking behavior of subsequent fathers ($y_j \to z_{j+1} \to y_{j+1}$). The occupations of the father $j$ is likely not only linked to the base level of the leave uptake probabilities, but affects also on how sensitive the workplace culture and father $j+1$ are to the choices of the father $i$.

I estimated the model with $\lambda = 0.1$ and 100 initializations both with $S=2$ and $S=3$ hidden states, and chose the latter based on insights from the simulation experiments. This was also supported by the AIC and BIC measures (with differences 770 and 235, respectively, in favor of $S=3$).

\autoref{fig:emissions} shows the marginal emission probabilities from the estimated model, conditional on the year of birth of the child. State 1 is characterized by high probability of fathers using the quota (about 60\% before 2013 and 70\% after), whereas fathers in workplaces at State 2 tend take mostly paternity leave, while State 3 is mostly characterised by high probability of not taking leave at all (30\% change before 2013, and about 35\% after). In all states, the likelihood of using quota or more increases considerably in 2013, mostly at the cost of decreased probability of using only paternity leave, except in the State 3 where also the probability of not taking any leave increases in 2013. 

The yearly marginal state probabilities are shown in \autoref{fig:states}. Probability of a workplace being in State 1 increases almost linearly over time, while probability of being at State 2 also increases before 2013, before starting to slowly decrease. Probability of State 3 decreases dramatically from over 40\% to 15-20\% in 2013, after which it stays relatively constant. Marginal transition probabilities available in section E of the supplementary material show that the states 1 and 2 have high self-transition probabilities both before and after reform (around 95\% and 92\% in 2010, 98\% and 95\% in 2017), while self-transition probabilities for State 3 in 2010 is only 56\%, but 91\% in 2017. 

 \begin{figure}[!ht]
    \centering
    \includegraphics[width=1\linewidth]{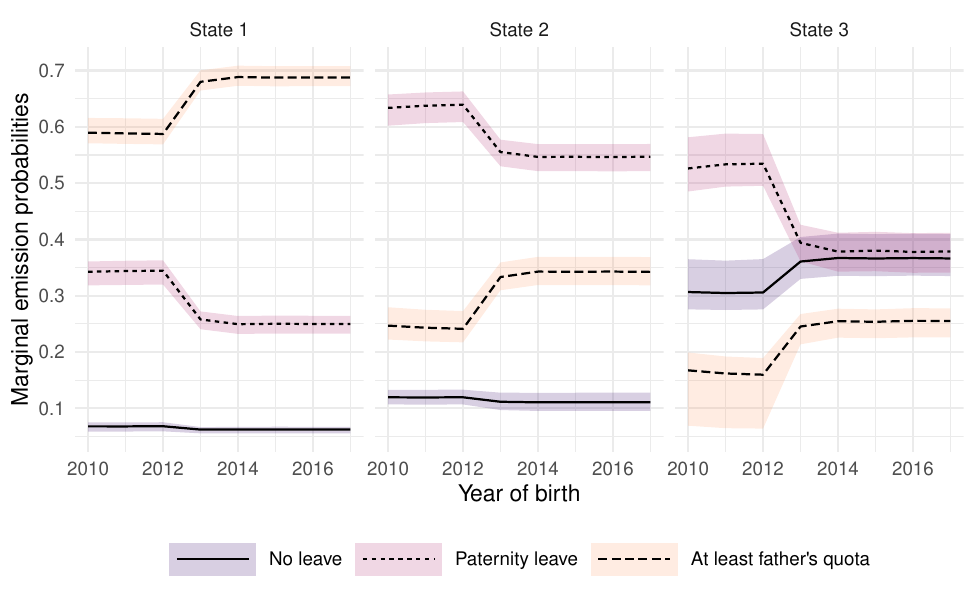}
    \caption{Average marginal emission probabilities and their 95\% confidence intervals by the child's year of birth.}
    \label{fig:emissions}
\end{figure}
\begin{figure}[!ht]
    \centering
    \includegraphics[width=1\linewidth]{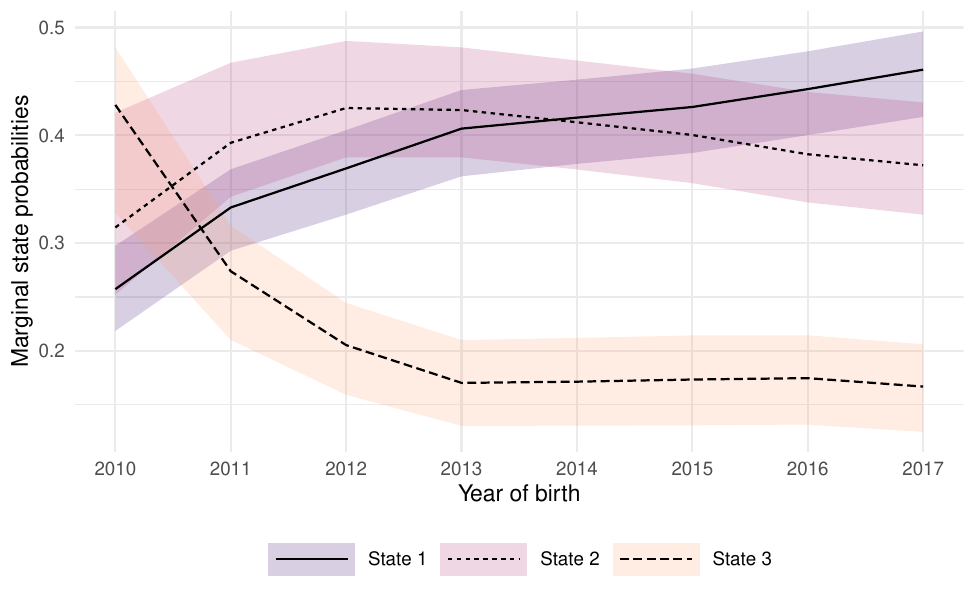}
    \caption{Average marginal state probabilities and their 95\% confidence intervals by the child's year of birth.}
    \label{fig:states}
\end{figure}

\autoref{fig:leaves} shows the causal effects $P(y_t \mid \operatorname{do}(x_{2013:2017} = 1), w_t) - P(y_t \mid \operatorname{do}(x_{2013:2017} = 0), w_t)$, $t=0,\ldots,10$ for the two occupation levels $w$, where $t$ is the rank of a father in a workplace since the 2013 reform. The reform effect is clear: the probability of using quota increases by 13 percentage points (ppts) (95\% confidence interval $(12,14)$ ppts) for first fathers in the lower-skill occupation category, and 6 $(4, 7)$ ppts for higher-skill occupations. The probability of using only paternity leave slowly decreases when more and more fathers are exposed to the reform. There is a minor positive effect in on the probability of not taking any leave at all, reaching at most 1.7 and 1.3 ppts for lower and higher occupations, with 95\% CIs $(0.9, 2.4)$ and $(0.5, 2.1)$, respectively, but these effects start to diminish after $t=4$. After this initial phase, the positive effect on father's quota use begins to increase, reaching 17 $(14, 19)$ and 8 $(5, 10)$ ppts for fathers with lower-skill and higher skill occupations, respectively. Section E of the online supplement has additional figures of the reform effect.

\begin{figure}[!ht]
    \centering
    \includegraphics[width=1\linewidth]{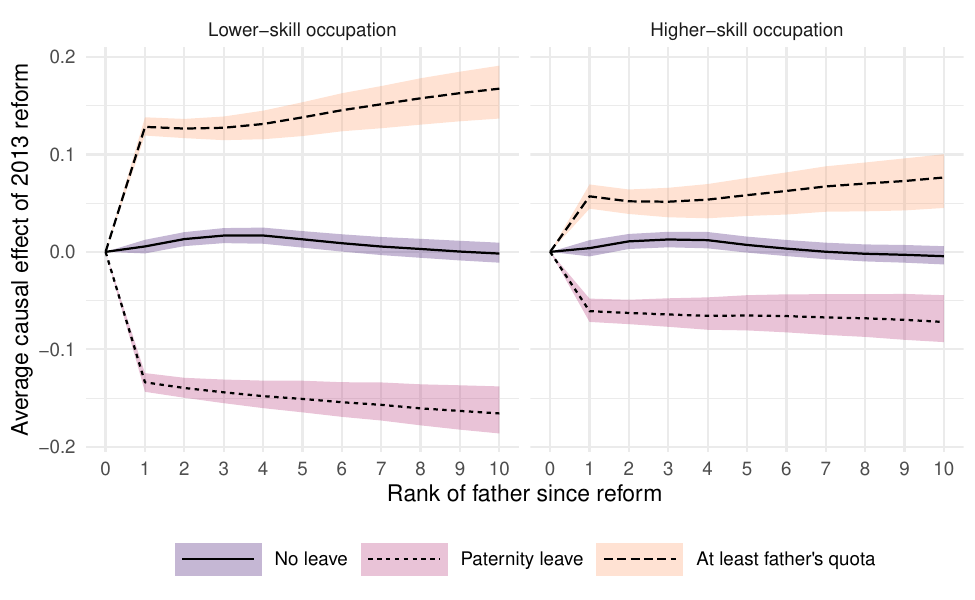}
    \caption{Average causal effects and 95\% confidence intervals of the Finnish 2013 parental leave reform, conditional on the occupation level.}
    \label{fig:leaves}
\end{figure}

\section{Discussion}

Feedback-augmented non-homogeneous HMMs provide a novel approach for observational causal inference in case of time-varying heterogeneity when the interest is in the causal effects of observables on the outcomes. The standard assumption of HMMs, that individuals' latent trajectories can be adequately represented by sequences of a relatively few discrete latent states, might not always be plausible. While the autoregressive and feedback dependencies in FAN-HMMs provide additional flexibility, they of course are not a panacea for capturing complex time-varying heterogeneity in general. While focusing here on maximum likelihood estimation, further research on fully Bayesian approaches for estimating FAN-HMMs could be valuable as it would allow easier augmentation of the model with additional features such as individual-level random effects.

Although estimating FAN-HMMs, and HMMs in general, can be computationally demanding, in the case of categorical outcomes, the subsequent computation of the causal effects can be performed efficiently by minor modifications of the standard forward algorithm. Due to the marginalization of latent states, the estimates can be robust to overestimation of the number of latent states, while state-specific effects can provide additional insights on the heterogeneity of the causal effects. 

There exists various approaches for choosing the number of hidden states of HMM \citep{Zucchini2009, Visser2022}. \citet{Pohle2017} gives several practical suggestions, and shows that when the true DGP does not fully match standard HMM, state selection based on information criteria tends to favor extra states which try to capture the neglected aspects of the data-generating process. \citet{Pohle2017} also points the importance of considering whether the aim is to obtain interpretable hidden states and corresponding emission distributions, or if the model is to be mainly used for forecasting where larger number of states may provide more accurate predictions. The causal inference task considered here is more akin to the latter (unless the interest is in state-specific causal effects). Simulation experiments suggest that when the number of hidden states is unknown, it might be reasonable to lean toward overestimation, which, at least in the experiments conducted here, is constrained by the estimability of models with too many states. This increased difficulty in finding the same mode under repeated estimation with multiple initial values could be used as an additional heuristic strategy for selecting the number of states. Similar reasoning was also used by \citet{Pasanen2024} in the context of Bayesian spatio-temporal HMM, where the aim was to find few interpretable latent states capturing the dynamics of a process varying smoothly over time.

The empirical application of the FAN-HMM for studying a parental leave reform illustrated how FAN-HMM allows modeling interventions having multiple causal pathways with cumulative long-term effects. Naturally, the justification of causal interpretation of these results relies on the assumption that there were no other events in 2013 which could have affected the leave taking of fathers, and that the FAN-HMM sufficiently captured the relevant aspects of the data generating process. While additional data for example on the partners' education level could reduce the uncertainty and the risk of model misspecification, I kept the model relatively simple for illustrative purposes. With these caveats, I estimated clear positive effect of the reform on the amount of leave taken, with only a minor decrease on the probability of taking no leave at all. Overall, fathers in occupations with lower skill requirements were more affected by the reform, likely partially due to the fact that highly educated fathers (likely in higher-skill occupations) were already using more parental leaves before reform \citep{Helske2024_peer}.  

\section{Acknowledgments}

\if1\blind
{
This work was supported by the INVEST Research Flagship Centre and the Research Council of Finland under grants 331817, 355153, and 345546. I wish to acknowledge CSC – IT Center for Science, Finland, for computational resources. I wish to thank Satu Helske for helpful comments and Simon Chapman for providing the original parental leave data. ChatGPT-4o was used to assist in refining the manuscript's flow and grammar, with all content subsequently reviewed and edited by the author. There are no competing interests to declare.
} \fi

\if0\blind
{
ChatGPT-4o was used to assist in refining the manuscript's flow and grammar, with all content subsequently reviewed and edited by the author. There are no competing interests to declare.
} \fi

\appendix

\section{Parameterization of coefficients of NHMMs}\label{app:gammas}

To keep the softmax functions in Equation (1) identifiable, a typical approach is to set one of the rows of the coefficient matrices $\boldsymbol{\gamma}$ to zero. However, I use an alternative approach in which I define $\boldsymbol{\gamma} = \mathbf{Q}\boldsymbol{\eta}$, where $\boldsymbol{\eta}$ is $S-1 \times K$ matrix of working parameters, $\mathbf{Q}$ is an $S\times S - 1$ orthogonal matrix obtained by taking the first $S - 1$ columns of the $\mathbf{Q}$ matrix from the QR decomposition of a $S \times S$ matrix $\mathbf{D}$. The first $S-1$ rows of $\mathbf{D}$ contain identity matrix, while the last row is $-\mathbf{1}$. Thus, the matrix $\mathbf{Q}$ projects $\boldsymbol{\eta}$ to $\boldsymbol{\gamma}$ so that each column of $\boldsymbol{\gamma}$ sums to zero, and the columns are orthogonal (for $\boldsymbol{\gamma}^B_s$, replace $S$ with $M$). The main reason for this sum-to-zero parameterization is that it makes handling of label-switching easier by allowing the permutation of the rows of $\boldsymbol{\gamma}$, see \autoref{app:MLE}. In addition, the orthogonality transformation can potentially improve the performance of the numerical optimization algorithms used in the maximum likelihood estimation by making the $\gamma$s equicorrelated with smaller (negative) pairwise correlations compared  to a simple sum-to-zero construction $\boldsymbol{\gamma} = \mathbf{D}\boldsymbol{\eta}$, where the last row of $\gamma$ has strong (negative) correlations with the other rows. A similar approach has also been recommended in the Bayesian context of general sum-to-zero vectors, as this makes it easier to define symmetric priors and improves the efficiency of gradient-based sampling algorithms. 

\section{Parameter estimation of FAN-HMMs}\label{app:MLE}

The maximum likelihood estimation of HMMs has been traditionally performed using the Baum-Welch algorithm \citep{Baum1970}, a specific instance of more general EM algorithm \citep{Dempster1977}. However, an alternative approach of direct numerical maximization (DNM) of the marginal likelihood was already proposed by \citet{Levinson1983}, and later recommended by many authors \citep{Altman2005,Turner2008,Zucchini2009}. The strengths and weaknesses of both EM and DNM have been discussed also in \citep{Cappe2005} and \citep{Visser2022}. A key advantage of DNM is its faster convergence rate compared to EM. \citet{Zucchini2009} notes the EM algorithm relies on outputs from both the forward and backward algorithms \citep{Rabiner1989}, whereas the marginal likelihood in DNM can be computed using only the forward algorithm. However, gradient-based optimization methods for DNM still require the backward algorithm to compute analytical gradients. On the other hand, in case of NHMMs, the M-step of the EM algorithm cannot be solved in closed form, necessitating the use of numerical optimization methods even within the EM framework.

In addition to computation time, an important consideration in choosing an optimization method is its ability to converge to the global optimum, as the likelihood of HMMs is multimodal \citep{Rabiner1989, Cappe2005, Zucchini2009, Visser2022}. Often the discussion of multimodality of HMMs focuses on label-switching problem which refers to the fact that the latent states can be ordered arbitrary without changing the likelihood value. Much work has been done to mitigate the issues of label-switching especially in Bayesian context, see for example \citep{Stephens2002, Jasra2005}. From a maximum likelihood perspective, the multimodality due to the label-switching is typically not an issue during the parameter estimation, although it can still cause problems when performing bootstrap sampling for uncertainty estimates due to permutations of the maximum likelihood estimates. However, often the likelihood surface contains also genuine local optima which makes the estimation of HMMs challenging.

\citet{Bacri2023} compared various numerical optimization algorithms for estimating HMMs with Poisson and Gaussian responses using automatic differentiation with Template Model Builder \citep{Kristensen2016}. \citet{Bacri2023} found that most methods they compared performed similarly in terms of finding the global optimum when algorithms were initialized with random values. However, their criteria for identifying the global optimum was relatively lenient, allowing a 5\% tolerance in log-likelihood values. Moreover, \citet{Bacri2023} defined the global optimum by initializing all algorithms with the true parameter values and taking the median of the resulting optima rather than the maximum, which does not accurately represent the global optimum unless all algorithms converge to same value.

The multimodality of the likelihood is likely greater when the model is misspecified, which is essentially always true when working for example with data related to complex social processes. While FAN-NHMMs and NHMMs can in principle alleviate the degree of model misspecification compared to HMMs by capturing heterogeneity of individuals using the covariate information, the general issue of multimodality remains a problem of (non-homogeneous) hidden Markov models.

In addition to point estimates, we naturally want some measures of uncertainty for the model parameters and functions of these. \citet{Zucchini2009} and \citet{Visser2022} discuss obtaining uncertainty estimates using parametric bootstrap and inverse of Hessian. \citet{Bacri2023} uses automatic differentiation to obtain Hessian and subsequent Delta method to compute confidence intervals for the functions of model parameters. However, the regularity conditions required for the asymptotic normality of MLE are often violated in the context of HMMs and consequently, uncertainty estimates based on the Hessian can be unreliable (see, e.g., Section 3.6.1 in \citep{Zucchini2009} and references therein). On the other hand, bootstrap approaches are computationally demanding, and parametric bootstrap relies on the assumption that the model is correctly specified. Given a NHMM with multiple observed sequences, a nonparametric bootstrap with resampling of sequences can be expected to be more robust to the misspecification of the number of hidden states compared to the parametric bootstrap. In order to deal with the label-switching between bootstrap replications, given the sum-to-zero constrain on $\boldsymbol{\gamma}$ and maximum likelihood estimate $\boldsymbol{\hat\gamma}$, we can permute the states of each bootstrap replicate $j$ for example by finding the permutation that minimizes $\|\boldsymbol{\gamma} - \boldsymbol{\gamma}\|_2$. For typical problems with small number of states such as $S < 4$, an exhaustive search is reasonable, while in more general case this can be done efficiently using Hungarian algorithm \citep{Kuhn1955}.

For Bayesian inference of HMMs, multimodality can affect the performance of the posterior sampling via Markov chain Monte Carlo (MCMC) methods and render many typical convergence measures unusable. Some MCMC samplers such as Gibbs tend to sample around single posterior mode if the modes are well-separated, giving a false impression of unimodal posterior, whereas more efficient samplers such as Hamiltonian Monte Carlo may sometimes be able to explore multiple nearby modes, but still in general struggle to fully sample multimodal posteriors. On the other hand, methods designed for multimodal posteriors such as parallel tempering \citep{Geyer1991} typically require some problem-specific tuning. While focusing on the posterior around single highest (global) mode can be reasonable in practice, it is somewhat unsatisfactory solution from a fully Bayesian perspective as the results might do not reflect the full posterior uncertainty of the problem \citep{Jasra2005}. On the other hand, often the interest in using HMMs is in interpreting, visualizing and summarizing the latent dynamics where the latent states are labeled based on the characteristics of estimated transition and emission probabilities. The benefit of using maximum likelihood, or a maximum a posteriori, estimates, is that by focusing on single mode we get more easily interpretable hidden states compared to for example fully Bayesian (multimodal) parameter estimates which are challenging to summarise and interpret compactly.

\section{Gradients for NHMMs}\label{app:gradients}

As in the case of homogeneous HMMs, the log-likelihood of NHMM can be obtained using the forward algorithm, given the model parameters $\boldsymbol{\pi}_i$, $\mathbf{A}_{i,t}$, and $\mathbf{B}_{i,t}$, $i = 1,\ldots, N$, $t=1,\ldots, T_i$. To simplify the notation, I drop the index $i$ in the following; the final log-likelihood and the gradients are the sum of the individual contributions. Denote the likelihood function $L(\eta^\pi, \eta^\pi, \eta^B)$ as $L$. Extending results of \citep{Levinson1983} to time-varying emission and transition matrices, we have
\begin{align*}
\frac{\partial \log L}{\partial \pi} =& \frac{\mathbf{B}_1(:, y_1) \circ \beta_1}{L},\\
\frac{\partial \log L}{\partial \mathbf{A}_t(s,:)} =& \frac{\alpha_{t}(s)\mathbf{B}_{t+1}(:, y_{t+1}) \circ \beta_{t+1}}{L},\quad s = 1,\ldots, S, \quad t = 2, \ldots, T,\\
\frac{\partial \log L}{\partial \mathbf{B}_t(s,:)} =& \mathbf{e}_{y_t} \circ
\frac{\textrm{I}(t > 1)\sum_{s=1}^S \alpha_{t}(s) \mathbf{A}_t(s,:) \circ \beta_{t+1} + \textrm{I}(t = 1)\pi \circ \beta_1}{L},\\
&\quad s = 1,\ldots, S, \quad t = 1, \ldots, T,
\end{align*}
where $\alpha$ and $\beta$ are the standard forward and backward probabilities \citep{Rabiner1989}, $\mathbf{A}_t(s,:)$ and $\mathbf{B}_t(:,m)$ denote row $s$ of $\mathbf{A}^t$ and column $m$ of $\mathbf{B}^t$ respectively, and $\mathbf{e}_{i}$ is the basis vector where $e(j) = 1$ if $j=i$ and $0$ otherwise.

Using the chain rule we have
\begin{align*}
\frac{\partial \log L}{\partial \eta^\pi} =& Q_S^\top (\textrm{diag}(\boldsymbol{\pi}) - \boldsymbol{\pi}\boldsymbol{\pi}^\top)\frac{\partial \log L}{\partial \boldsymbol{\pi}} x^\pi,\\
\frac{\partial \log L}{\partial \eta^A_s} =& Q_S^\top\sum_{t=2}^{T} \left[\left(\textrm{diag}(\mathbf{A}_t(s,:) - \mathbf{A}_t(s,:)^\top \mathbf{A}_t(s,:)\right) \frac{\partial \log L}{\partial \mathbf{A}_t(s,:)}x_t^A\right],\quad s = 1,\ldots, S,\\
\frac{\partial \log L}{\partial \eta^B_s} =& Q_M^\top \sum_{t=1}^T \left[\left(\textrm{diag}(\mathbf{B}_t(s,:) - \mathbf{B}_t(s,:)^\top \mathbf{B}_t(s,:)\right)\frac{\partial \log L}{\partial \mathbf{B}_t(s,:)}x_t^B\right],
\quad s = 1,\ldots, S,
\end{align*}
where $x^\pi, x_t^A$ and $x_t^B$ are the row vectors of covariates, $\eta^A_s$ and $\eta^B_s$ are working parameters related to row $s$ of matrices $\mathbf{A}$ and $\mathbf{B}$ respectively, and $Q_S$ and $Q_M$ are $S \times S - 1$ and $M \times M - 1$ transformation matrices defined in \autoref{app:gammas}.

\subsection{Gradients for the M-step of EM algorithm}

Given the expected counts $\mathbf{E}^\pi$, $\mathbf{E}_t^A$, and $\mathbf{E}_t^B$ from the previous E-step of the EM algorithm, we maximize the expected log-likelihood functions
\begin{align*}
\mathbb{E}[\log L(\eta^\pi)] =& \log(\pi)^\top \mathbf{E}^\pi,\\
\mathbb{E}[\log L(\eta^A_s)] =& \sum_{t=2}^{T}\log(\mathbf{A}_t(s,:)) \mathbf{E}^A_t(s,:)^\top,\quad s = 1,\ldots, S,\\
\mathbb{E}[\log L(\eta^B_s)] =& \sum_{t=1}^{T}\log(\mathbf{B}_t(s,y_t)) \mathbf{E}^B_t(s,y_t),\quad s = 1,\ldots, S.
\end{align*}

The corresponding gradients are
\begin{align*}
\frac{\partial\mathbb{E}[\log L(\eta^\pi)]}{\partial \eta^\pi} =& Q_S^\top\left[\mathbf{E}^\pi - \mathbf{1}_S^\top \mathbf{E}^\pi \pi\right] x^\pi,\\
\frac{\partial\mathbb{E}[\log L(\eta^A_s)]}{\partial \eta^A_s} =& Q_S^\top\sum_{t=2}^{T}\left[(\mathbf{E}^A_t(s,:) - \mathbf{E}^A_{t}(s,:)\mathbf{1}_S \mathbf{A}_t(s,:))^\top x^A_t\right],\quad s = 1,\ldots, S,\\
\frac{\partial\mathbb{E}[\log L(\eta^B_s)]}{\partial \eta^B_s} =& Q_M^\top\sum_{t=1}^{T}\left[\mathbf{E}_t^B(s,y_t) (\mathbf{e}_{y_t} - \mathbf{B}_t(s,:)^\top )x^B_t\right],\quad s = 1,\ldots, S.
\end{align*}

\section{Additional simulation experiments}\label{app:hmm_s4}

Here I compare estimation of standard HMMs using EM algorithm, gradient based numerical optimization algorithm L-BFGS \citep{LBFGS}, and their hybrid version EM-L-BFGS.

For the experiment, I generated synthetic data using three-state HMM defined as
\begin{equation}\label{eq:hmm3}
\pi = \begin{pmatrix}
     0.8  \\
     0.1 \\
     0.1
    \end{pmatrix},
A = \begin{pmatrix}
     & 0.85  & 0.1 & 0.05 \\
     & 0.1 & 0.8 & 0.1\\
     & 0.05 & 0.05 & 0.9
    \end{pmatrix},
B = \begin{pmatrix}
     & 0.8  & 0.05 & 0.1 & 0.05 \\
     & 0.15  & 0.5  & 0.25 & 0.1\\
     & 0.1 & 0.05  & 0.25 & 0.6
    \end{pmatrix}.
\end{equation}
I simulated $N=1000$ sequences of length $T=50$ from this model, and estimated the model 400 times using all three optimization algorithms, with varying the number of estimated hidden states $S=2,3,4$. As in the simulation experiments in the main paper, for each replication, initial values in terms of working parameters $\eta$ were sampled from zero-mean normal distribution with $\sigma^2=4$, except that the expected values of $\eta$s corresponding to $A$ were set so that $\mathbb{E}(a_{ij}) = 1-0.05(S-1)$ for $i=j$ and 0.05 otherwise.

\autoref{tab:optimization} shows the success rate and computation time in seconds for each configuration. First, we can see that all methods struggle somewhat in converging to global mode when the number of states is overestimated, especially without any regularisation. When model is correctly specified, EM and EM-L-BFGS perform relatively similar in terms of success rate, with L-BFGS performing slightly worse. Without regularisation, the computation speed of L-BFGS and the hybrid EM-L-BFGS is comparable but both methods using EM algorithm slow down with regularisation as in that case the M-step need to rely on numerical optimization. The plain EM slows down significantly in the problematic case of $S=4$, especially with regularization.

\begin{table}[!ht]
\centering
\begin{tabular}{llccccc}
 & & \multicolumn{2}{c}{$\lambda = 0$} && \multicolumn{2}{c}{$\lambda = 0.1$}\\
  \cline{3-4} \cline{6-7} \\[-1em] 
S & Method & Success (\%) & Time (s) && Success (\%) & Time (s) \\
\hline
2 & L-BFGS      & 89    & 4  && 99.8  & 4 \\
  & EM          & 99.5  & 3  && 99.8  & 62 \\
  & EM-L-BFGS   & 99.5  & 3  && 99.5  & 41 \\
3 & L-BFGS      & 86.5  & 9  && 100  & 10 \\
  & EM          & 99.5  & 5  && 99  & 100 \\
  & EM-L-BFGS   & 98.5  & 4  && 99  & 59 \\
4 & L-BFGS      & 11.2  & 24  && 33.5 & 31 \\
  & EM          & 13.5  & 86 && 20.5 & 1571 \\
  & EM-L-BFGS   & 17.2  & 23  && 25.5 & 151 \\
\end{tabular}
\caption{Performance comparison of different estimation methods across varying number of estimated number of hidden states $S$ and regularization $\lambda$ for three-state HMM.}
\label{tab:optimization}
\end{table}

\autoref{fig:max_A_S3} shows how the convergence issues manifest in the actual parameter estimates by plotting the estimated maximum transition probabilities, $\max_{s,r} a_{sr}$ for each initialization, sorted by the rank of the the log-likelihood values. For illustrative purposes, \autoref{fig:max_A_S3} displays only results for EM-L-BFGS method; results for other optimization methods are comparable. The fluctuation of the values in case of $S=4$ indicates practical issues when overestimating the number of hidden states, as the transition probabilities vary several percentage points. While using weak regularization helps in convergence, it does not significantly affect the parameter estimate; the horizontal lines for specific $S$ are aligned, indicating that the estimated transition probabilities remain the same across both cases.

\begin{figure}[!ht]
    \centering
    \includegraphics[width=1\linewidth]{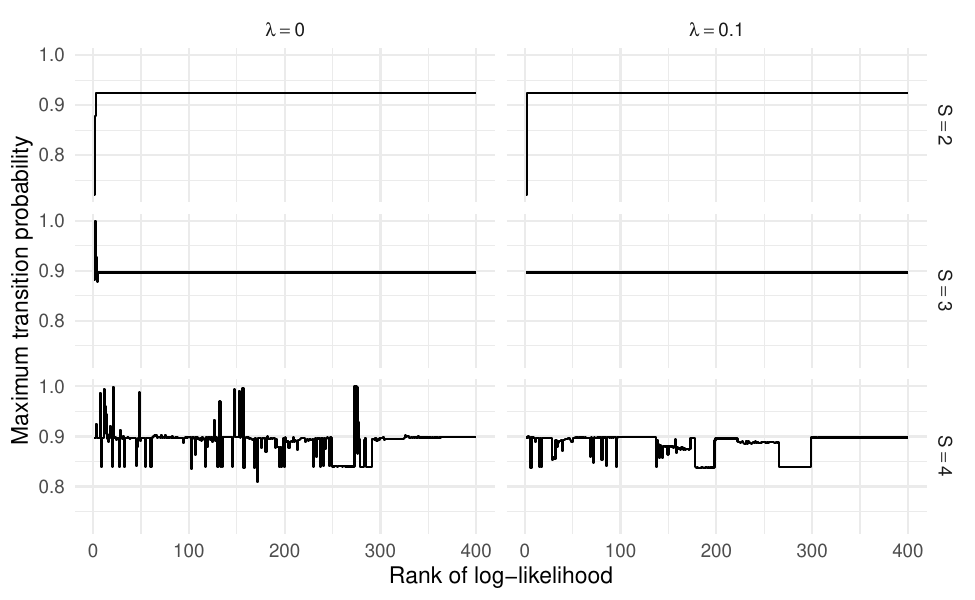}
    \caption{The estimated maximum value of transition probabilities from 400 replications using EM-L-BFGS, ordered by the log-likelihood values. Rows indicate the estimated number of hidden states, while columns refer to the regularization parameter $\lambda$.}
    \label{fig:max_A_S3}
\end{figure}

\autoref{tab:optimization_S4} and \autoref{fig:max_A_S4} shows the results for a case of four-state HMM with $M=6$ observed symbols, again with $N=1000$ and $T=50$, with the true model parameters corresponding to
\[
\pi = \begin{pmatrix}
     0.8  \\
     0.1 \\
     0.05 \\
     0.05
    \end{pmatrix},
A = \begin{pmatrix}
     & 0.8 & 0.1 & 0.05 & 0.05 \\
     & 0.1 & 0.7 & 0.1 & 0.1 \\
     & 0.05 & 0.05 & 0.8 & 0.1 \\
     & 0.025 & 0.05 & 0.025 & 0.9
    \end{pmatrix},
B = \begin{pmatrix}
     & 0.5  & 0.1 & 0.1 & 0.05 & 0.05 & 0.2 \\
     & 0.1  & 0.5 & 0.1 & 0.1 & 0.1 & 0.1\\
     & 0.2 & 0.05 & 0.3 & 0.3 & 0.05 & 0.1 \\
     & 0.05 & 0.05 & 0.1 & 0.3 & 0.4 & 0.1
    \end{pmatrix}.
\]

I left the plain EM algorithm out from the comparisons due to its computational load. Overall these results are in line with the case of three-state HMM and the FAN-HMM experiment: overestimation of hidden states cause severe problems to optimization.

\begin{table}[!ht]
\centering
\begin{tabular}{llccccc}
 & & \multicolumn{2}{c}{$\lambda = 0$} && \multicolumn{2}{c}{$\lambda = 0.1$}\\
  \cline{3-4} \cline{6-7} \\[-1em] 
S & Method & Success (\%) & Time (s) && Success (\%) & Time (s) \\
\hline
2 & L-BFGS      & 78.5  & 5  && 99.8  & 6 \\
  & EM-L-BFGS   & 99.5  & 2  && 99.2  & 37 \\
3 & L-BFGS      & 56.5  & 14  && 71  & 16 \\
  & EM-L-BFGS   & 65.2  & 9  && 64.8  & 108 \\
4 & L-BFGS      & 74.2  & 27  && 99.5 & 27 \\
  & EM-L-BFGS   & 85.5  & 19  && 85.2 & 174 \\
5 & L-BFGS      & 4.5   & 61  && 45.8 & 68 \\
  & EM-L-BFGS   & 6.8   & 52  && 41.8 & 279 \\
6 & L-BFGS      & 0     & 96  && 1 & 100 \\
  & EM-L-BFGS   & 0.5   & 88  && 1.5 & 382 \\
\end{tabular}
\caption{Performance comparison of different estimation methods across varying number of estimated number of hidden states $S$ and regularization $\lambda$ for four-state HMM.}
\label{tab:optimization_S4}
\end{table}

\begin{figure}[!ht]
    \centering
    \includegraphics[width=1\linewidth]{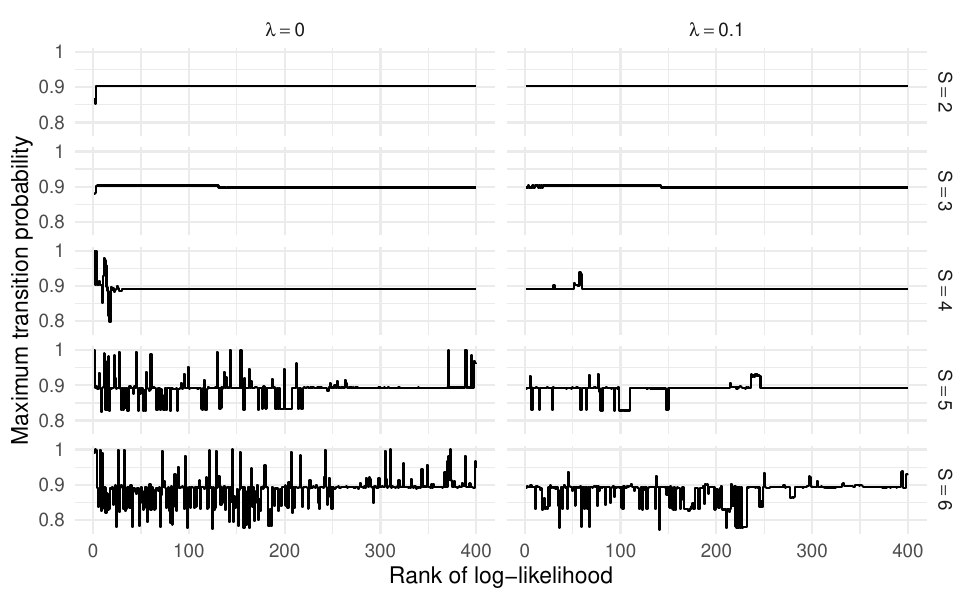}
    \caption{The estimated maximum value of transition probabilities of four-state HMM from 400 initializations using EM-L-BFGS, ordered by the log-likelihood values. Rows indicate the estimated number of hidden states, while columns refer to the regularization parameter $\lambda$.}
    \label{fig:max_A_S4}
\end{figure}

\subsection{Bayesian estimation}\label{app:bayes}

Here, I illustrate the practical challenges of Bayesian HMM estimation using the same four-state HMM as before. The model was estimated using Stan \citep{stan} with 16 parallel chains, each of length 1000 + 1000 iterations, with the first 1000 discarded as warm-up. The prior for the $\eta$ parameters was set as $N(0, \sqrt{10})$, matching the regularisation $\lambda = 0.1$ used in the maximum likelihood case. The model code is available on GitHub. \autoref{tab:hmm_S4_stan} shows the computation times and $\hat{R}$ values \citep{Vehtari2021} for the log-posterior and the maximum transition probabilities (both of which are invariant to label-switching), as well as the bulk and tail effective sample sizes \citep{Vehtari2021} of these quantities.

For $S=3$, $\hat{R}$ values well over 1.01 indicate a lack of convergence, while for the true number of hidden states ($S=4$), $\hat{R}$ values suggest convergence with large effective sample sizes. For $S=5$, the model appears to converge based on $\hat{R}$, but the effective sample sizes are considerably smaller than in the $S=4$ case. In addition, with $S=5$, there were 24 divergent transitions, indicating potential bias due to the sampler's insufficient exploration of the posterior. Furthermore, 1754 transitions hit the maximum tree depth, manifesting as inefficient sampling. This is also reflected in the computation times: compared to the run time of 37 minutes for $S=4$, sampling took almost 12 hours for $S=5$. In summary, misspecification of $S$ leads to various issues in model estimation, as in the maximum likelihood case.

\begin{table}[!ht]
\centering
\begin{tabular}{lcccccccc}
 & & \multicolumn{3}{c}{log-posterior} && \multicolumn{3}{c}{$\max_{s,r} a_{sr}$}\\
  \cline{3-5} \cline{7-9}\\[-1em] 
S & Time (min) & $\hat R$ & ESS-bulk & ESS-tail && $\hat R$ & ESS-bulk & ESS-tail\\
\hline
3 & 32  & 1.39 & 34   & 81   && 1.13 & 78     & 139 \\
4 & 37  & 1.00 & 6173 & 9467 && 1.00 & 16717 & 14864 \\
5 & 717 & 1.01 & 1168 & 2687 && 1.01 & 3714  & 1824
\end{tabular}
\caption{Computation times, $\hat R$, and ESS values of Bayesian four-state HMM with different number of estimated states $S$.}
\label{tab:hmm_S4_stan}
\end{table}

\section{Additional figures for parental leave application}

\autoref{fig:transitions} shows the average marginal yearly transition probabilities. Workplaces in states 1 and 2 have high self-transition probabilities, and transitioning from State 1 to State 3 is very unlikely. Workplaces in State 3 are more likely to transition to two other states. However, in all states, and especially in State 3, the self-transition probabilities increase after 2013.

 \begin{figure}[!ht]
    \centering
    \includegraphics[width=1\linewidth]{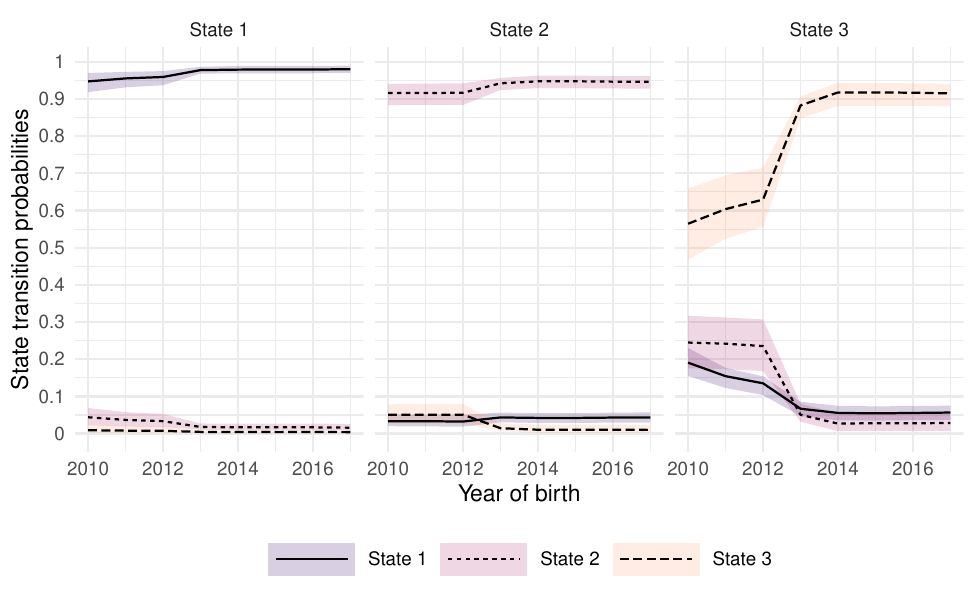}
    \caption{Average marginal state transition probabilities and their 95\% confidence intervals, conditional on the occupation.}
    \label{fig:transitions}
\end{figure}

The average causal effect of the 2013 reform on the emission probabilities in \autoref{fig:ace_emission} show that the probability of quota usage increased in all states due to reform with somewhat varying degrees, and interestingly the probability of not using any leave also increased in State 3, which explain the temporary positive effect of reform on the probability of not taking any leave visible in Figure 7 of the main paper. From \autoref{fig:ace_state} we see that that there is first an increase in the marginal probability of State 3, which then starts to diminish after few fathers. The likelihood of being in State 2 decreases over time, while probability of State 1 increases.

 \begin{figure}[!ht]
    \centering
    \includegraphics[width=1\linewidth]{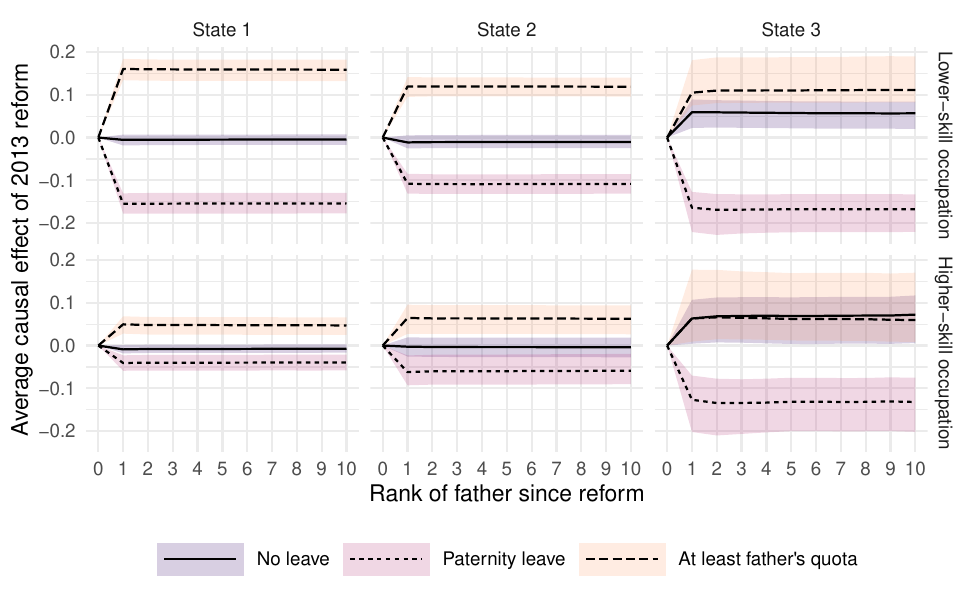}
    \caption{Average causal effect of 2013 reform on the emission probabilities and their 95\% confidence intervals, conditional on the occupation.}
    \label{fig:ace_emission}
\end{figure}
\begin{figure}[!ht]
    \centering
    \includegraphics[width=1\linewidth]{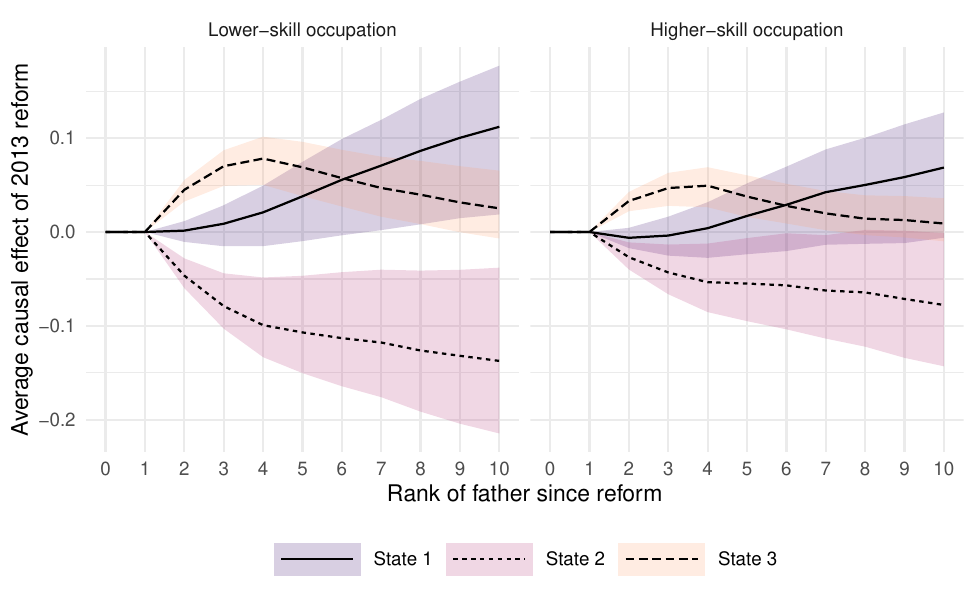}
    \caption{Average causal effect of 2013 reform on the emission probabilities and their 95\% confidence intervals, conditional on the occupation.}
    \label{fig:ace_state}
\end{figure}
\clearpage
\newpage
\bibliographystyle{apalike}
\bibliography{refs}

\end{document}